\numberwithin{equation}{section}
\newtheorem{Theorem}{Theorem}[section]
\newtheorem*{Theorem*}{Theorem}
\newtheorem{Corollary}[Theorem]{Corollary}
\newtheorem{Lemma}[Theorem]{Lemma}
\newtheorem{Proposition}[Theorem]{Proposition}
 { \theoremstyle{definition}
\newtheorem{Definition}[Theorem]{Definition}

\newtheorem{Example}[Theorem]{Example}
\newtheorem{Remark}[Theorem]{Remark} }
\newcommand{\p}{\partial}
\newcommand{\de}{\mathrm{d}}
\newcommand{\bbr}{\mathbb{R}}
\newcommand{\bbz}{\mathbb{Z}}
\newcommand{\s}{\mathrm{S}}
\newcommand{\id}{\mathrm{id}}
\newcommand{\D}{\mathrm{D}}
\newcommand{\E}{\mathrm{E}}
\newcommand{\Lrm}{\mathrm{L}}
\newcommand{\uJalpha}{u_{\bf J}^{\alpha}}
\newcommand{\mbn}{\mathbf{n}}
\newcommand{\mbu}{\mathbf{u}}
\newcommand{\mbv}{\mathbf{v}}
\newcommand{\mbx}{\mathbf{x}}
\newcommand{\mbI}{\mathbf{I}}
\newcommand{\mbJ}{\mathbf{J}}
\newcommand{\mbK}{\mathbf{K}}
\newcommand{\mbzero}{\mathbf{0}}
\newcommand{\mcH}{\mathcal{H}}
\newcommand{\mcD}{\mathcal{D}}
\newcommand{\mcJ}{\mathcal{J}}
\newcommand{\mcK}{\mathcal{K}}
\newcommand{\mcT}{\mathcal{T}}
\newcommand{\pde}{P$\Delta$E\ }
\newcommand{\pdes}{P$\Delta$Es\ }
\newcommand{\pdee}{P$\Delta$E}
\newcommand{\pdese}{P$\Delta$Es}
\newcommand{\dde}{D$\Delta$E\ }
\newcommand{\ddes}{D$\Delta$Es\ }
\newcommand{\ddese}{D$\Delta$Es}
\newcommand{\prol}{P(J^\infty(\mcT_\mbn))}
\newcommand{\uoo}{u_{0;0}}
\newcommand{\uoi}{u_{0;1}}
\newcommand{\uio}{u_{1;0}}
\newcommand{\uom}{u_{0;-1}}
\newcommand{\voo}{v_{0;0}}
\newcommand{\voi}{v_{0;1}}
\newcommand{\vio}{v_{1;0}}
\newcommand{\vom}{v_{0;-1}}
\begin{document}
\allowdisplaybreaks

\renewcommand{\thefootnote}{}

\newcommand{\arXivNumber}{2309.09040}

\renewcommand{\PaperNumber}{006}

\FirstPageHeading

\ShortArticleName{Moving Frames: Difference and Differential-Difference Lagrangians}

\ArticleName{Moving Frames: Difference and Differential-Difference\\ Lagrangians\footnote{This paper is a~contribution to the Special Issue on Symmetry, Invariants, and their Applications in honor of Peter J.~Olver. The~full collection is available at \href{https://www.emis.de/journals/SIGMA/Olver.html}{https://www.emis.de/journals/SIGMA/Olver.html}}}

\Author{Lewis C.~WHITE and Peter E.~HYDON}
\AuthorNameForHeading{L.C.~White and P.E.~Hydon}
\Address{School of Mathematics, Statistics and Actuarial Science, University of Kent,\\
Canterbury, Kent, CT2 7NF, UK}
\Email{\href{mailto:lcwhite29@gmail.com}{lcwhite29@gmail.com}, \href{mailto:P.E.Hydon@kent.ac.uk}{P.E.Hydon@kent.ac.uk}}

\ArticleDates{Received September 19, 2023, in final form January 09, 2024; Published online January 15, 2024}

\Abstract{This paper develops moving frame theory for partial difference equations and for differential-difference equations with one continuous independent variable. In each case, the theory is applied to the invariant calculus of variations and the equivariant formulation of the conservation laws arising from Noether's theorem. The differential-difference theory is not merely an amalgam of the differential and difference theories, but has additional features that reflect the need for the group action to preserve the prolongation structure. Projectable moving frames are introduced; these cause the invariant derivative operator to commute with shifts in the discrete variables. Examples include a Toda-type equation and a method of lines semi-discretization of the nonlinear Schr\"odinger equation.}

\Keywords{moving frames; difference equations; differential-difference equations; variational calculus; Noether's theorem}

\Classification{39A14; 58D19; 47E07}

\renewcommand{\thefootnote}{\arabic{footnote}}
\setcounter{footnote}{0}

\section{Introduction}

The modern formulation of moving frames introduced by Fels and Olver \cite{fels1998moving,fels1999moving} is a powerful tool in the analysis of partial differential equations (PDEs). It enables one to reduce a given system of PDEs to an invariant system by factoring out Lie symmetry group orbits (locally, at least).
If the symmetries are extraneous to the problem of interest (for instance, projective symmetries in computer vision \cite{olvercompvis}), the moving frame provides a major simplification. For a~clear, straightforward introduction to moving frames, see Mansfield's text \cite{mansfield2010practical}.

Many PDE systems of interest are Euler--Lagrange equations with a Lie group of variational symmetries. For such systems, moving frames are most effectively applied by invariantizing the Lagrangian functional directly \cite{kogan2003invariant}. With this approach, Noether's theorem has an elegant formulation in terms of the adjoint action of the Lie group on a set of invariants \cite{gonccalves2012moving,gonccalves2013moving,gonccalves2016moving}.

Difference equations have discrete independent variables, so any Lie symmetries act only on the dependent variables. Commonly, the action varies with the discrete variables. Moving frames can be adapted to an equation on a finite set of points by using a finite-dimensional product space. Finite difference approximation of a given differential equation requires consistency as the points coalesce. This constraint led to the introduction of multi-space \cite{olvermulti}, which has been used to construct (highly accurate) invariant approximations of ordinary differential equations (ODEs) \cite{kimolver}. More generally, a discrete moving frame attaches a finite-dimensional product space to each base point, without imposing coalescence or any other structure in advance \cite{mansfield2013discrete}. The discrete moving frame construction works for any number of independent variables and has been used to generalize multi-space to higher dimensions \cite{maribeffa2018discrete}.

Difference equations have an intrinsic structure that arises from their mesh point labels. We restrict attention to the most common case, an $m$-dimensional logically rectangular mesh. The labels $n^i\in\bbz,\ i=1,\dots, m$, can be regarded as the independent variables. Each label belongs to an ordered set, so it is helpful to incorporate this ordering into the moving frame definition. This has been achieved for ordinary difference equations (O$\Delta$Es), resulting in the invariant variational calculus and equivariant Noether's theorem \cite{mansfield2019movinga,mansfield2019movingb}. The current paper extends this approach to partial difference equations (P$\Delta$Es) and to differential-difference equations (D$\Delta$Es) with one continuous independent variable.

Section \ref{2sec} summarizes the building-blocks of the \pde theory, from which we develop difference moving frames (see Section \ref{3sec}), the invariant calculus of variations (see Section \ref{Section: The invariant formulation of the Euler--Lagrange equations}), and the equivariant formulation of Noether's conservation laws (see Section \ref{Section: Conservation laws}). Even for scalar \pdese, it is necessary to use several generating invariants and to take the relations between these into account. The more dependent and independent variables there are, the more complex these relations can become. For clarity, we illustrate the general theory with fairly straightforward examples. In particular, we use a Toda-type equation as a running example to show the various aspects of the theory.

There is one aspect of the \pde theory that is simpler than its counterpart for PDEs: the Lie group action on the independent variables is trivial. This is not necessarily the case for \ddese, although there are constraints on the group action, as discussed in Section \ref{Section: Differential-difference structure}. These constraints suggest the idea of a projectable moving frame, which is a major simplification. Section \ref{Section: The differential-difference calculus of variations} outlines the invariant variational calculus and Noether conservation laws for \ddese, emphasizing those aspects of the theory that do not follow immediately from the \pde theory. Section \ref{exam} presents some examples, including a method of lines semi-discretization of the nonlinear Schr\"odinger equation. Concluding remarks are given in Section \ref{conc}.

\section{A brief summary of the building-blocks}\label{2sec}

\subsection{Difference prolongation space} \label{Section: Difference prolongation space}

A given differential equation can be represented as a variety within an appropriate jet space (see~\cite{olver2000applications}). A difference equation has discrete independent variables, so to use moving frames, one must represent the equation as a variety within an appropriate continuous space. Such spaces are subspaces of the difference prolongation space \cite{peng2022transformations}, which is described briefly in this section.

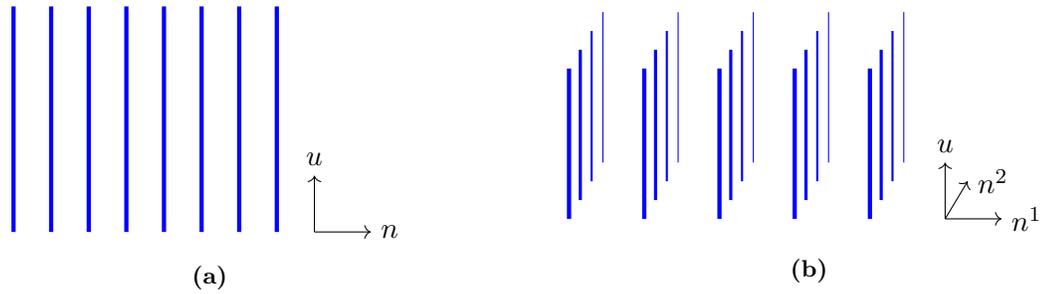
\begin{figure}[t]
\centering
 \begin{subfigure}[h]{0.49\textwidth}
 \centering
 \begin{tikzpicture}
 \draw[->] (4,0) -- (4,0.75) node[above] {$u$};
 \draw[->] (4,0) -- (4.75,0) node[right] {$n$};
 \draw[blue, ultra thick] (0,0) -- (0,3);
 \draw[blue, ultra thick] (0.5,0) -- (0.5,3);
 \draw[blue, ultra thick] (1,0) -- (1,3);
 \draw[blue, ultra thick] (1.5,0) -- (1.5,3);
 \draw[blue, ultra thick] (2,0) -- (2,3);
 \draw[blue, ultra thick] (2.5,0) -- (2.5,3);
 \draw[blue, ultra thick] (3,0) -- (3,3);
 \draw[blue, ultra thick] (3.5,0) -- (3.5,3);
 \end{tikzpicture}
 \caption{}
 \label{Fig: Ordinary difference equation}
 \end{subfigure}
\begin{subfigure}[h]{0.49\textwidth}
\centering
 \begin{tikzpicture}
 \draw[->] (5,0) -- (5,0.75) node[above] {$u$};
 \draw[->] (5,0) -- (5.75,0) node[right] {$n^{1}$};
 \draw[->] (5,0) -- (5.3,0.5) node[right] {$n^{2}$};
 \draw[blue, ultra thick] (0,0) -- (0,2);
 \draw[blue, very thick] (0.15,0.25) -- (0.15,2.25);
 \draw[blue, thick] (0.3,0.5) -- (0.3,2.5);
 \draw[blue, thin] (0.45,0.75) -- (0.45,2.75);
 \draw[blue, ultra thick] (1,0) -- (1,2);
 \draw[blue, very thick] (1.15,0.25) -- (1.15,2.25);
 \draw[blue, thick] (1.3,0.5) -- (1.3,2.5);
 \draw[blue, thin] (1.45,0.75) -- (1.45,2.75);
 \draw[blue, ultra thick] (2,0) -- (2,2);
 \draw[blue, very thick] (2.15,0.25) -- (2.15,2.25);
 \draw[blue, thick] (2.3,0.5) -- (2.3,2.5);
 \draw[blue, thin] (2.45,0.75) -- (2.45,2.75);
 \draw[blue, ultra thick] (3,0) -- (3,2);
 \draw[blue, very thick] (3.15,0.25) -- (3.15,2.25);
 \draw[blue, thick] (3.3,0.5) -- (3.3,2.5);
 \draw[blue, thin] (3.45,0.75) -- (3.45,2.75);
 \draw[blue, ultra thick] (4,0) -- (4,2);
 \draw[blue, very thick] (4.15,0.25) -- (4.15,2.25);
 \draw[blue, thick] (4.3,0.5) -- (4.3,2.5);
 \draw[blue, thin] (4.45,0.75) -- (4.45,2.75);
\end{tikzpicture}
\caption{}
\label{Fig: Partial difference equation}
\end{subfigure}
\caption{(a) The total space for scalar O$\Delta$Es is $\mathcal{T}=\bbz \times \bbr$; (b) the total space for scalar P$\Delta$Es with two independent variables is $\mathcal{T} = \bbz^{2} \times \bbr$.} \label{Fig: Difference equations graph}
\end{figure}

Consider a difference equation with independent variables ${\bf{n}} := \bigl(n^{1},n^{2},\dots,n^{m}\bigr) \in \bbz^{m}$ and dependent variables ${\bf{u}} := \bigl(u^{1},u^{2},\dots,u^{q}\bigr) \in \bbr^{q}$.
These variables are coordinates on the \textit{total space}~$\bbz^{m} \times \bbr^{q} $; a solution of the \pde is a graph on the total space. Figure \ref{Fig: Difference equations graph} illustrates the total spaces for a scalar O$\Delta$E and \pdee. For simplicity, we assume that the equation holds for all $\mbn\in\bbz^m$; our results apply \textit{mutatis mutandis} to difference equations on any product lattice (see \cite{hydon2014difference}).
We also assume that all functions are smooth in their continuous arguments.

The total space is mapped to itself by horizontal translations
\begin{equation*}
\mathrm{T}_{\bf{I}}\colon\ \bbz^{m} \times \bbr^{q} \rightarrow
\bbz^{m} \times \bbr^{q} ,
\qquad \qquad
\mathrm{T}_{\bf{I}}\colon\ ( {\bf{n}},{\bf{u}} ) \mapsto ( {\bf{n}} + {\bf{I}},{\bf{u}} ),
\end{equation*}
where ${\bf I} \in \bbz^{m}$ is a multi-index. Over each $\mbn$, one can construct the prolongation space $P(\bbr^{q})$, which is an infinite-dimensional Cartesian product space with coordinates $\uJalpha\in\bbr$, where $\mbJ\in\bbz^m$. Let $P_\mbn(\bbr^{q})$ denote the prolongation space over a given $\mbn$; this is a continuous fibre over the fixed base point $\mbn$. Every graph on the total space defines a point in $P_\mbn(\bbr^{q})$, with the coordinate $\uJalpha$ taking the value of $ u^{\alpha} $ given by the graph at $ {\bf{n}} + {\bf{J}} $.

Horizontal translation extends naturally to the total prolongation space $ \bbz^{m} \times P ( \bbr^{q} ) $ as follows:
\begin{equation*}
\mathrm{T}_{\bf{I}}\colon\ \bigl( {\bf{n}} , \bigl( u_{{\bf{J}}}^{\alpha} \bigr)\bigr)
\mapsto
\bigl( {\bf{n}} + {\bf{I}} , \bigl( u_{{\bf{J}}}^{\alpha} \bigr)\bigr).
\end{equation*}
Suppose that $f$ is a function on $\bbz^m\times P (\bbr^{q} )$; its restriction to $P_{\bf n} (\bbr^{q} )$ is denoted by
\begin{equation*}
f_{\bf{n}} \bigl( \bigl( u_{{\bf{J}}}^{\alpha} \bigr) \bigr) = f \bigl( {\bf{n}} ,\bigl( u_{{\bf{J}}}^{\alpha} \bigr) \bigr).
\end{equation*}
The pullback $\mathrm{T}_{\bf I}^{\ast}$ of $f_{\bf{n} +{\bf I}} \bigl( \bigl( u_{{\bf{J}}}^{\alpha} \bigr) \bigr)$ to $P_{\bf n}(\bbr^{q})$ is
\begin{equation*}
\mathrm{T}_{\bf I}^{\ast} f_{{\bf n}+{\bf I}} \bigl( \bigl( u_{\bf J}^{\alpha} \bigr) \bigr) = f\bigl({\bf n}+{\bf I}, \bigl(u_{{\bf J}+{\bf I}}^{\alpha}\bigr)\bigr).
\end{equation*}
This can be represented on the prolongation space over $\mbn$ as a mapping $\s_\mbI$, called the \textit{shift} by~$\mbI$, which acts on smooth functions $f\in C^\infty(P_{{\bf{n}}} ( \bbr^{q} ))$ as follows:
\begin{equation*}
\s_{\bf{I}} f\bigl( {\bf{n}} , \bigl( u_{\bf{J}}^{\alpha}\bigr)\bigr) = f\bigl( {\bf{n}} + {\bf{I}} ,\bigl( u_{{\bf{J}} + {\bf{I}}}^{\alpha}\bigr)\bigr).
\end{equation*}
To summarize, each shift operator represents the action of the pullback on functions as follows: $ \s_{{\bf I}} f_{{\bf{n}}} := \mathrm{T}_{{\bf{I}}}^{\ast} f_{{{\bf{n}} + {\bf{I}}}}$. Although $\s_\mbI$ represents a translation, it does not change the fibre.
By using the shift operators, one can represent a given \pde as a variety on $P_{{\bf{n}}} ( \bbr^{q} )$. We do this from here on, using the Einstein summation convention to denote sums over all variables other than~$ \bf n $, as far as possible. For simplicity, we omit the multi-index subscript $\phantom{}_{\bf 0}$ on variables, except where this may cause confusion. In particular, $u^\alpha$ denotes $u^\alpha_{\bf 0}$ henceforth.

\begin{Remark}\label{restrict}
For a given \pdee, one can restrict attention to a finite-dimensional subspace of the prolongation space, provided that this includes all $\uJalpha$ for which $\mbJ$ lies within the stencil of the \pde with respect to $u^\alpha$ (see~\cite{hydon2014difference}), together with any other relevant $\uJalpha$ for the problem being considered. However, for generality, we use the full prolongation space over~$\mbn$.
\end{Remark}

\subsection{The difference variational calculus} \label{Section: The difference variational calculus}

This section summarizes the difference variational calculus from the formal viewpoint introduced by Kupershmidt \cite{kupershmidt1985discrete}. It closely resembles the formal differential variational calculus described in~\cite{olver2000applications}. Summation by parts replaces integration by parts, and a difference version of the divergence is used, which we now describe.

Each shift operator $\s_{{\bf J}}$, where $\mbJ=\bigl(j^1,\dots,j^m\bigr)$, may be written as a product of unit shift operators, $\s_i:=\s_{{\bf 1}_{i}}$, and their inverses. Here ${\bf 1}_{i}$ is the multi-index whose only non-zero entry is the $i^\mathrm{th}$ one, which is $1$. Thus $\s_i$ is the forward shift in the $n^i$-direction. By the composition rule for translations, $\s_i\s_j=\s_j\s_i$ and $\s_{\bf J}=\s_{1}^{j_{1}} \cdots \s_{m}^{j_{m}}$; consequently, $(\s_\mbJ)^{-1}=\s_{-\mbJ}$. The identity operator, $\id:=\s_{\bf 0}$, maps every function to itself, and the forward difference operator in the direction $n^{k} $ is
$\D_{n^{k}} = \s_{k}- \id$.
A \textit{difference divergence} is an expression of the form $\operatorname{Div} (F) = \D_{n^{k}} F^{k}$ for some $F:= \bigl(F^{1}, \dots, F^{m}\bigr)$. It is straightforward to write a given expression of the form $(\s_\mbJ -\id)f$, where $f$ is a function, as a difference divergence; however, the resulting functions $F^k$ are not unique if $m>1$ and may be messy.

\begin{Definition}
A \textit{conservation law} of a given system of \pdes is a difference divergence expression, $\mathcal{C}=\operatorname{Div} (F)$, such that $\mathcal{C}=0$ on all solutions of the system.
\end{Definition}

A linear difference operator on $P_{{\bf{n}}} (\bbr^{q})$ is an operator of the form $\mcH=h^\mbJ\s_\mbJ$, where each $h^\mbJ$ is a function. The formal adjoint of $\mcH$ is the operator $\mcH^\dagger$ defined by
\[
f\mcH g -\bigl(\mcH^\dagger f\bigr)g\in \mathrm{im(Div)}
\]
for all functions $f$, $g$. Explicitly, $\mcH^\dagger f=\s_{-\mbJ}\bigl(h^\mbJ f\bigr)$, because
\[
fh^\mbJ\s_\mbJ g-\bigl(\s_{-\mbJ}\bigl(h^\mbJ f\bigr)\bigr)g=(\s_\mbJ -\id)\bigl\{\bigl(\s_{-\mbJ}\bigl(h^\mbJ f\bigr)\bigr)g\bigr\}.
\]
A special case is the very useful \textit{summation by parts} formula
\begin{equation*} 
	f ( \s_{\bf J} g ) =
 ( \s_{-\bf J} f )g + ( \s_{\bf J} - \id ) \{ ( \s_{-\bf J} f )g \}.
\end{equation*}

The basic variational problem is to find the extrema of a given functional
\begin{equation*} 
\mathcal{L} [ {\bf u}] = \sum_{\bf n} \Lrm ({\bf n}, [{\bf u}] ),
\end{equation*}
where $[{\bf u}]$ represents finitely many shifts of the dependent variables.
Extrema are found by requiring that
\begin{equation*}
\bigg\lbrace \frac{\de}{\de \epsilon} \mathcal{L} [ {\bf u} + \epsilon {\bf w} ] \bigg\rbrace \bigg\vert_{\epsilon=0} =0,
\end{equation*}
for all $ {\bf w}\colon \mathbb{Z}^{m} \rightarrow \mathbb{R}^{q} $ that vanish sufficiently rapidly as any independent variable approaches infinity.
Using summation by parts,
\begin{align*}
\frac{\de}{\de \epsilon} \bigg \vert_{\epsilon=0} \Lrm (\mbn,[{\bf u}+\epsilon{\bf w}] ) = \left({\s}_{{\bf J}} w^{\alpha} \frac{\p \Lrm}{\p u_{{\bf J}}^{\alpha}} \right)= w^{\alpha} \E_{u^{\alpha}} ( \Lrm ) + \operatorname{Div} ( A_{\bf u} ( {\bf n},[\mbu],[{\bf w}] ) ),
\end{align*}
where
\begin{equation*}
	\E_{u^{\alpha}} = \s_{-\bf J} \frac{\p}{\p \uJalpha}
\end{equation*}
is the difference Euler--Lagrange operator with respect to $u^{\alpha}$, and
\begin{equation*}
	\operatorname{Div} ( A_{\bf u} ( {\bf n},[\mbu],[{\bf w}] ) ) = \sum_{\bf J} ( \s_{\bf J}-\id )\left(w^{\alpha} {\s}_{-{\bf J}} \frac{\p \Lrm}{\p u_{{\bf J}}^{\alpha}} \right).
\end{equation*}

As $\mathbf{w}$ is arbitrary and the sum of $\operatorname{Div} ( A_{\bf u} ( {\bf n},[\mbu],[{\bf w}] ) )$ over $\mbn$ is zero (by Stokes' theorem), the extrema satisfy the following system of Euler--Lagrange (difference) equations,
\begin{equation*}
\E_{u^{\alpha}} (\Lrm) = {\s}_{{-\bf J}}\left( \frac{\p \Lrm }{\p u_{{\bf J}}^{\alpha}} \right)=0.
\end{equation*}

\begin{Remark}\label{remt}
If the dependent variables are regarded as depending smoothly on a continuous parameter $t$ as well as $\mbn$, the same result is achieved by using
\begin{equation*}
\frac{\de}{\de t} \bigg \vert_{(u^{\alpha})'=w^{\alpha}} \Lrm [{\bf u}] =0 ,
\end{equation*}
where $(u^{\alpha})'=\de u^{\alpha}/\de t$.
This approach is used later to derive the Lie group invariant version of the Euler--Lagrange equations.
\end{Remark}

\subsection{Variational point symmetries and Noether's theorem}

We now outline some relevant facts about Lie point symmetries, with application to variational calculus (see \cite{hydon2014difference,olver2000applications} for further details). Let $G$ be an $R$-dimensional Lie group parametrized by~$\boldsymbol{\varepsilon} = \bigl( \varepsilon^{1},\dots,\varepsilon^{R} \bigr)\in\bbr^R$ in some neighbourhood of the identity, $e$. For now, we restrict attention to such a neighbourhood, in which the elements of $G$ are $ \Gamma ( \boldsymbol{\varepsilon} ) $, where $\Gamma$ depends smoothly on~$\boldsymbol{\varepsilon}$, with $ \Gamma ( {\bf 0} ) = e $.
Locally, the left action of $G$ on the coordinates $ {\bf u} = \bigl( u^{1},\dots,u^{q} \bigr) $ is denoted by~$ \widehat{\bf u} = \Gamma ( \boldsymbol{\varepsilon} ) \cdot {\bf u} $. The $R$-dimensional Lie algebra $ \mathcal{X} $ of infinitesimal generators has a~basis
\begin{equation*} 
	{\bf v}_{r} = Q_{r}^{\alpha} ( {\bf n}, {\bf u} ) \p_{u^{\alpha}}, \qquad r=1,\dots, R,\qquad \text{where} \quad Q_{r}^{\alpha} = \frac{\p \widehat{u}^{\alpha}}{\p \varepsilon^{r}} \bigg \vert_{\boldsymbol{\varepsilon} = {\bf 0}},
\end{equation*}
so every infinitesimal generator of a one-parameter (local) Lie subgroup of point transformations is of the form ${\bf v} = Q^{\alpha} ( {\bf n}, {\bf u} ) \p_{u^{\alpha}}$,
where $Q^{\alpha}=c^rQ_{r}^{\alpha}$ for some real constants $c^r$.
The $q$-tuple $\mathbf{Q}=\bigl(Q^1,\dots, Q^R\bigr)$ is the \textit{characteristic} of the Lie subgroup whose infinitesimal generator is $\mathbf{v}$.

Each infinitesimal generator is a tangent vector field on the total space. It is represented on the prolongation space $P_{\bf n} (\bbr^{q} ) $ by
the prolonged vector field
\begin{equation*}
	\operatorname{pr} {\bf v} = (\s_{\bf J} Q^{\alpha}) \frac{\p}{\p u_{\bf J}^{\alpha}}.
\end{equation*}
From here on, we refer to $\operatorname{pr} {\bf v}$ simply as ${\bf v}$, because it will always be clear whether the generator is acting on the total space or the prolongation space over $\mbn$. Note that $\mbn$ is invariant under the Lie group action, as are the shift operators $\s_\mbJ$.

Denote the left action of a general group element $g\in G$ (not necessarily in the neighbourhood of the identity) on the total space by $ \widetilde{\bf u} = g \cdot {\bf u}$ and define
\begin{equation*}
	\widetilde{\bf v}_{r} = Q_{r}^{\alpha} ({\bf n},\widetilde{\bf u} ) \p_{\widetilde{u}^{\alpha}}, \qquad r=1,\dots,R.
\end{equation*}
The adjoint representation of $g$ on $ \mathcal{X} $ can be expressed as a matrix, $\mathcal{A}{\rm d}(g)= ( a_{r}^{s}(g) ) $, whose components are determined from the following relations (see \cite{mansfield2019movinga}):
\begin{equation} \label{Eq: Adjoint identity}
	{\bf v}_{r} = a_{r}^{s}(g)\widetilde{\bf v}_{s}, \qquad r = 1,\dots,R.
\end{equation}
By regarding the infinitesimal generators as differential operators and applying the left-hand side of the identity (\ref{Eq: Adjoint identity}) to each $ \widetilde{u}^{\alpha} $ in turn, one obtains
\begin{equation} \label{Eq: Adjoint identity 2}
	\left(\frac{\p \widetilde{u}^{\alpha}}{\p u^{\beta}}\right) Q_{r}^{\beta} ({\bf n},{\bf u} )={\bf v}_{r} (\widetilde{u}^{\alpha} )= Q_{s}^{\alpha} ({\bf n},\widetilde{\bf u} ) a_{r}^{s}(g),
\end{equation}
where the matrix $ \bigl(\p \widetilde{u}^{\alpha}/ \p u^{\beta}\bigr)$ is the Jacobian matrix of the transformation $g\colon\mbu\rightarrow\widetilde{\mbu}$.
Prolonging this result to each $\uJalpha$ gives the identities
\begin{equation*}
	\left( \frac{\p \widetilde{u}_{\bf J}^{\alpha}}{\p u_{\bf J}^{\beta}}\right) Q_{r}^{\beta} ( {\bf n + J}, {\bf u_{J}} ) = Q_{s}^{\alpha} ( {\bf n} + {\bf J}, \widetilde{\bf u}_{\mbJ} ) a_{r}^{s}(g).
\end{equation*}

\begin{Definition}
	The point transformations generated by $\mbv$ are \textit{variational symmetries} of the Lagrangian $\Lrm(\mbn,[\mbu])$ if there exist functions $B^{i} ({\bf n},[{\bf u}] )$ such that
	\begin{equation}\label{Eq: Variational symmetries}
		\mathbf{v}(\Lrm):=(\s_{\bf J} Q^{\alpha}) \frac{\p \Lrm}{\p u_{\bf J}^{\alpha}} = \D_{n^{i}} B^{i}.
	\end{equation}
\end{Definition}

The Lagrangian is invariant under the symmetries generated by $\mathbf{v}$ if $B^{i}=0 $ for all $i$.
Summing~\eqref{Eq: Variational symmetries} by parts leads to Noether's theorem for P$\Delta$Es (see \cite{hydon2014difference}).

\begin{Theorem}[difference Noether's theorem] \label{Thm: Noether's theorem}
	Suppose that a Lagrangian $\Lrm$ has a variational symmetry with characteristic $\bf{Q} \neq {\bf 0}$. Then the system of Euler--Lagrange equations has the following conservation law:
	\begin{equation} \label{Eq: Difference Noether's}
-\D_{n^{i}} B^{i}+\sum_{\bf J} ( \s_{\bf J}-\id )\left(Q^{\alpha} {\s}_{-{\bf J}} \frac{\p \Lrm}{\p u_{{\bf J}}^{\alpha}} \right) =0.		
	\end{equation}
\end{Theorem}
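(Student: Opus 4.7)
The plan is to reuse the summation-by-parts calculation that was already carried out in deriving the Euler--Lagrange equations, but with the arbitrary variation $w^{\alpha}$ replaced by the symmetry characteristic $Q^{\alpha}$. This is the standard route through Noether's theorem: the variational symmetry condition gives one expression for $(\s_{\bf J}Q^{\alpha})\,\partial \Lrm/\partial u_{\bf J}^{\alpha}$, summation by parts gives another, and the difference of the two expressions produces the conservation law as soon as one imposes the Euler--Lagrange equations.

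Concretely, first I would start from the variational symmetry identity \eqref{Eq: Variational symmetries},
\[
(\s_{\bf J} Q^{\alpha})\,\frac{\p \Lrm}{\p u_{\bf J}^{\alpha}} \;=\; \D_{n^{i}} B^{i},
\]
and rewrite its left-hand side using exactly the manipulation that appears inside \eqref{Eq: Summation by parts of L}. Term by term in $\bf J$, the summation-by-parts formula \eqref{Eq: Sum by parts} (applied with $f=\partial\Lrm/\partial u_{\bf J}^{\alpha}$ and $g=Q^{\alpha}$, with the roles appropriately matched) produces
\[
(\s_{\bf J} Q^{\alpha})\,\frac{\p \Lrm}{\p u_{\bf J}^{\alpha}}
= Q^{\alpha}\,\s_{-{\bf J}}\!\left(\frac{\p \Lrm}{\p u_{\bf J}^{\alpha}}\right) + (\s_{\bf J}-\id)\!\left(Q^{\alpha}\,\s_{-{\bf J}}\frac{\p \Lrm}{\p u_{\bf J}^{\alpha}}\right).
\]
Summing over $\bf J$ collapses the first term on the right into $Q^{\alpha}\E_{u^{\alpha}}(\Lrm)$, by the definition of the difference Euler--Lagrange operator.

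Second, I would equate the two expressions for $(\s_{\bf J}Q^{\alpha})\,\p \Lrm/\p u_{\bf J}^{\alpha}$ to obtain the off-shell identity
\[
Q^{\alpha}\E_{u^{\alpha}}(\Lrm) \;=\; \D_{n^{i}}B^{i} - \sum_{\bf J}(\s_{\bf J}-\id)\!\left(Q^{\alpha}\,\s_{-{\bf J}}\frac{\p \Lrm}{\p u_{\bf J}^{\alpha}}\right).
\]
This is essentially the content of Noether's first theorem in the difference setting: the symmetry characteristic, contracted with the Euler--Lagrange expressions, equals a difference divergence. Restricting to any solution of the Euler--Lagrange system makes the left-hand side vanish, which, after multiplication by $-1$, is exactly \eqref{Eq: Difference Noether's}.

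There is no substantive obstacle; the only care needed is bookkeeping. The minor point to watch is that summation by parts over multi-indices ${\bf J}\in\bbz^{m}$ must be applied to each $\bf J$ independently and the resulting shift differences collected into a single sum, so that the identification with $\E_{u^{\alpha}}(\Lrm)=\s_{-{\bf J}}(\p \Lrm/\p u_{\bf J}^{\alpha})$ is literal rather than merely formal. Once that is done, the derivation is a single substitution $w^{\alpha}\!\mapsto\! Q^{\alpha}$ into the computation already performed in \eqref{Eq: Summation by parts of L}, together with the symmetry hypothesis and the Euler--Lagrange equations.
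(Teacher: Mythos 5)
Your proposal is correct and follows exactly the route the paper indicates: the paper's own justification is the single remark that summing the variational symmetry identity \eqref{Eq: Variational symmetries} by parts yields the theorem, and your argument simply carries out that summation by parts (via \eqref{Eq: Sum by parts}), identifies $Q^{\alpha}\E_{u^{\alpha}}(\Lrm)$, and imposes the Euler--Lagrange equations to obtain \eqref{Eq: Difference Noether's}.
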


From here on, we restrict attention to Lie groups of variational point symmetries that leave the Lagrangian invariant, so $B^i=0,\ i=1,\dots,m$.

\subsection{Moving frames} \label{Section: Moving frames}

We now outline some basics of moving frames on an arbitrary manifold (see \cite{fels1998moving,fels1999moving,mansfield2010practical} for further details).
Let $M$ be a smooth manifold. Suppose that a Lie group, $G$, of point transformations has a smooth left action on $M$ that is free and regular\footnote{The action is free if the only group element $g \in G$ that fixes every point in the neighbourhood is the identity. The action is regular if the orbits form a regular foliation.} in a neighbourhood $\mathcal{U}\subset M$ of a~point~$z\in M$.
Freeness and regularity are necessary and sufficient to guarantee the existence of a cross-section~$\mcK$ that is transverse to the group orbits that foliate $\mathcal{U}$, as shown in Figure \ref{Fig: Moving frame image}. Moreover, each orbit intersects $\mcK$ at a unique point.

If a group action is not free and regular, it can be made so by replacing $M$ by a Cartesian product space $M^N$ of sufficiently high dimension and using the induced product action (see Boutin \cite{boutin2002orbit}). Henceforth, we assume that $M$ is of sufficiently high dimension that the action is free and regular.

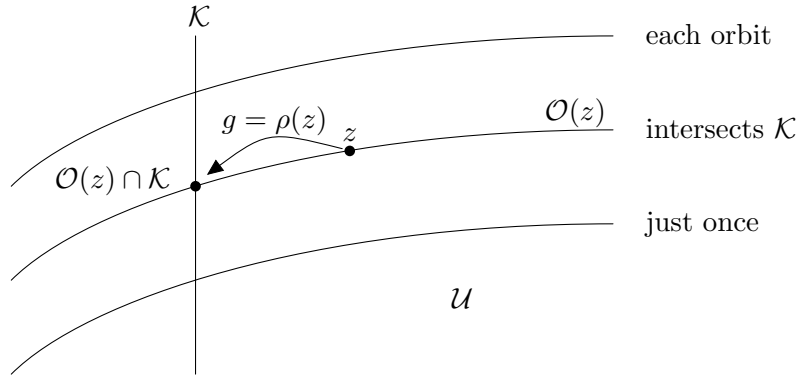
\begin{figure}[t]
\begin{center}
\begin{tikzpicture}[line cap=round,line join=round,>=triangle 45]
\draw (0,-0.25) .. controls (1,0.75) and (4,1.75).. (8,1.75);
\draw (0,1) .. controls (1,2) and (4,3).. (8,3);
\draw (0,2.25) .. controls (1,3.25) and (4,4.25).. (8,4.25);
\draw (2.45,-0.25) -- (2.45,4.25);
\draw (8.3,4.27) node[anchor=west] {each orbit};
\draw (8.3,3.03) node[anchor=west] {intersects $\mathcal{K}$};
\draw (8.3,1.74) node[anchor=west] {just once};
\draw (2.5,4.25) node[anchor=south] {$\mathcal{K}$};
\draw (7.5,2.9) node[anchor=south] {$\mathcal{O}(z)$};
\fill (2.45,2.25) circle[radius=2pt];
\draw (1.35,2) node[anchor=south] {$\mathcal{O}(z) \cap \mathcal{K}$};
\fill (4.5,2.72) circle[radius=2pt];
\draw (4.5,2.72) node[anchor=south] {$z$};
\draw (3.5,2.8) node[anchor=south] {$g=\rho (z)$};
\draw[->] (4.4,2.75) .. controls (3.4,3) .. (2.6,2.4);
\draw (6,0.5) node[anchor=south] {$\mathcal{U}$};
\end{tikzpicture}
\end{center}
\caption{Moving frame defined by a cross-section; $\mathcal{O}(z)$ denotes the group orbit through $z$.}\label{Fig: Moving frame image}
\end{figure}

\begin{Definition}[moving frame]
	Given a smooth Lie group action $G \times M \rightarrow M $, a moving frame is a smooth equivariant map $ \rho\colon \mathcal{U} \subset M \rightarrow G $. Here $ \mathcal{U} $ is called the domain of the frame.
\end{Definition}

Given a left action, $g \cdot z$, a left equivariant map satisfies $ \rho ( g \cdot z ) = g \rho (z)$ and a right equivariant map satisfies $ \rho ( g \cdot z ) = \rho (z) g^{-1}$.
The frame is called left or right accordingly.
The inverse of a~right frame is a~left frame.

To find a right frame for an $R$-dimensional Lie group, $G$, write the cross-section $ \mathcal{K} $ as a system of equations $ \psi_{r} (z) = 0$, $r = 1,\dots,R $.
Then solve the \textit{normalization equations},
\begin{equation} \label{Eq: Normalization equations general}
	\psi_{r} (g \cdot z) = 0, \qquad r = 1,\dots,R \text{,}
\end{equation}
to obtain the unique group element $g = \rho (z) $ that maps $ z $ to the intersection of its group orbit with $\mathcal{K}$ (see Figure \ref{Fig: Moving frame image}).
Both $ \rho (g \cdot z) $ and $ \rho (z) g^{-1} $ satisfy the equation $\psi_{r} ( \rho (g \cdot z) \cdot (g \cdot z) )=0$,
so by uniqueness, the solution is a right frame.

An important part of the method of moving frames is to choose a cross-section $\mathcal{K}$ that makes computations as simple as possible. It is usually easiest to choose a cross-section on which $R$ of the coordinates on $\mathcal{U}$ are constant. The normalization equations are then expressed as
\begin{equation*}
	g \cdot z_{1} = c_{1}, \qquad g \cdot z_{2} = c_{2}, \qquad \dots, \qquad g \cdot z_{R} = c_{R},
\end{equation*}
where $z_{r}$ are coordinates and $c_{r}$ are fixed constants. We will use this approach in our examples.

Given a left action $G \times M \rightarrow M$ and a right frame $ \rho $, let
\begin{equation*}
	\iota(z) = \rho (z) \cdot z= g \cdot z \vert_{g=\rho(z)}.
\end{equation*}
The components of $\iota(z)$ are invariant under the Lie group action, because
\begin{equation*}
		\iota (g \cdot z) = \rho (g \cdot z) \cdot (g \cdot z) = \rho (z) \cdot g^{-1} g \cdot z = \rho(z) \cdot z = \iota (z).
\end{equation*}
The non-constant components of $\iota(z)$ are called the \textit{normalized invariants}. Note that $\rho(\rho(z)\cdot z)$ is the identity element of $G$, so $\iota(\iota(z))=\iota(z)$. In other words, the operator $\iota$ projects $z$ to its invariant components. This operator extends to functions $f(z)$, as follows: $\iota (f (z) )= f ( \iota (z) )$.
Note that $\iota(\iota(f(z)))=\iota(f(z))$, so $\iota$ projects out the invariant component of $f(z)$; hence, $\iota$ is called the \textit{invariantization operator}. Indeed, if $F(z)$ is any invariant, the following \textit{replacement rule} applies:
\begin{equation}\label{Eq: Replacement rule}
	F(z)=F(\iota(z)).
\end{equation}
Consequently, the set of all invariants is generated by the normalized invariants $\iota(z)$.

\subsection{Discrete moving frames}

The discrete moving frame developed by Mansfield, Mar\'{i} Beffa and Wang \cite{mansfield2013discrete} and Mar\'{i} Beffa and Mansfield \cite{maribeffa2018discrete} can be thought of as a moving frame adapted to discrete base points.
The Lie group action on $M$ is extended to the diagonal (left) action on the Cartesian product manifold~$\mathcal{M} = M^{N} $:
\begin{equation*}
	g \cdot (z_{1},z_{2},\dots,z_{N} ) \mapsto ( g \cdot z_{1}, g \cdot z_{2},\dots, g \cdot z_{N} ).
\end{equation*}
No assumptions are made about any relationship between the elements $z_{1},\dots,z_{N}$.

\begin{Definition}[discrete moving frames]
	Let $ G^{N} $ denote the Cartesian product of $N$ copies of the group $ G$.
	A map
	\begin{equation*}
		\rho\colon\ M^{N} \rightarrow G^{N}, \qquad \rho(z) = (\rho_{1} (z),\dots, \rho_{N}(z) )
	\end{equation*}
	is a right discrete moving frame if
	\begin{equation*}
		\rho_{k} (g \cdot z) = \rho_{k}(z)g^{-1}, \qquad k= 1,\dots,N,
	\end{equation*}
	and a left discrete moving frame if
	\begin{equation*}
		\rho_{k} (g \cdot z) = g\rho_{k}(z), \qquad k= 1,\dots,N.
	\end{equation*}
The frame is right (resp.\ left) equivariant under the action of the Lie group.
\end{Definition}

A set of normalization equations yields a corresponding right discrete moving frame. The component $\rho_{k}$ is the unique element of $G$ that takes $z$ to the cross-section $ \mathcal{K}_{k} $.
The sequence of moving frames with a nontrivial intersection of domains $ ( \rho_{k} ) $ which makes up the discrete moving frame is, locally, uniquely determined by the cross-section $\mathcal{K}= ( \mathcal{K}_{1},\dots,\mathcal{K}_{N} )$ to the group orbit through $z$ (see \cite{mansfield2013discrete} for more details). Again, no assumptions are made about any relationship between the components of the cross-section.

The invariants of the right (discrete) frame are $I_{k,j}:=\rho_{k} (z) \cdot z_{j}$. If $M$ is $q$-dimensional, each~$z_{j}$ has components $z_{j}^{1},\dots,z_{j}^{q}$. So the components of $I_{k,j}$ are
\begin{equation*}
	I_{k,j}^{\alpha}:= \rho_{k}(z) \cdot z_{j}^{\alpha}, \qquad \alpha = 1,\dots,q.
\end{equation*}

The discrete moving frame applies to a wide variety of discrete domains.
We now show how it can be adapted to the difference prolongation space for \pdese, yielding the difference moving frame.

\section{Difference moving frames}\label{3sec}

In view of Remark \ref{restrict}, one can restrict attention to a finite prolongation space $\mathcal{M}$ of arbitrarily high dimension. This enables one to treat a difference moving frame as a particular type of discrete moving frame, so that the key definitions and theorems for discrete moving frames in~\cite{mansfield2013discrete,mansfield2019movinga,maribeffa2018discrete} apply. The main distinction is that for the difference frame, there is a relation between the cross-sections and frames on the different fibres: they must be consistent with the pullback to any particular fibre.

\subsection{Difference frames and invariants}

Let $ \mathcal{K} $ and $ \rho ( [{\bf u} ] ) $ denote the cross-section and frame on~$ {\bf n} $, respectively.
The cross-section on~$\bf n$, denoted $ \mathcal{K} $, is replicated for all the other base points $ {\bf n} + {\bf J} $ if and only if the cross-section over~$ {\bf n} + {\bf J} $ is represented on $\mathcal{M}$ by $ {\s_{\bf J}} \mathcal{K}$. Consequently, $\rho_\mbJ([\mbu])=\s_\mbJ\rho([\mbu])$; this constraint can be extended to the infinite-dimensional prolongation space $P_\mbn(\bbr^{q})$. From here on we use $\rho$ (resp.~$\rho_\mbJ$) as shorthand for $\rho_{\mathbf{0}}([\mbu])$ (resp.\ $\rho_\mbJ([\mbu])$).

\begin{Definition}
A difference moving frame is a discrete moving frame such that $\mathcal{M}$ is a prolongation space and the cross-section over $ {\bf n} + {\bf J} $ is represented on $\mathcal{M}$ by $ {\s_{\bf J}} \mathcal{K}$.
\end{Definition}
The normalized invariants for difference moving frames are
\begin{equation*}
I_{\bf K,\bf J}:= \rho_{\bf K} \cdot {\bf u}_{\bf J}= (\s_{\bf K}\rho ) \cdot (\s_{\bf J} {\bf u} ).
\end{equation*}
By definition, $\s_{i}I_{\bf K,\bf J}= I_{{\bf K}+{\bf 1}_{i},{\bf J}+{\bf 1}_{i}}$.
Hence, every invariant $I_{\bf K,\bf J}$ can be expressed as a shift of
\begin{equation}\label{Ionu}
	I_{\mbK-\mbJ,\bf 0} = (\s_{\mbK-\mbJ} \rho) \cdot {\bf u}.
\end{equation}

\begin{Definition}[discrete Maurer--Cartan invariants]
Given a right discrete moving frame $ \rho ( [{\bf u} ] ) $ (which commonly is expressed as a matrix), the right discrete \textit{Maurer--Cartan invariants} are the components of
\begin{equation*}
K_{(i)} = ( \s_{i} \rho ) \rho^{-1} = \iota ( \s_{i} \rho ), \qquad i = {1,\dots,m},
\end{equation*}
together with their shifts $\s_\mbJ K_{(i)}$. Invariance of $K_{(i)} $ follows from equivariance of the frame.
\end{Definition}

\begin{Definition}
A set of invariants is a \textit{generating set} for the algebra of functions of difference invariants if any difference invariant in the algebra can be written as a function of elements of the generating set and their shifts.
\end{Definition}

The invariantization of a multiply-shifted frame is obtained by concatenating the matrices of Maurer--Cartan invariants and their shifts. For instance,
\begin{equation*}
\iota ( \s_{i}\s_j \rho )= ( \s_{i}\s_{j}\rho )\rho^{-1}= ( \s_{j} K_{(i)} ) K_{(j)}.
\end{equation*}
Consequently, \eqref{Ionu} can be written as $I_{\mbK-\mbJ,{\bf 0}} = \bigl\{(\s_{\mbK-\mbJ} \rho)\rho^{-1} \bigr\}I_{\mathbf{0},\mathbf{0}}$,
where the term in braces is a concatenation of the Maurer--Cartan invariants and their shifts. This establishes the following result.

\begin{Proposition}\label{Prop: geninv}
Given a right difference moving frame $\rho ( [{\bf u} ] )$, the set of all invariants is generated by $I_{\bf{0},\bf{0}}=\rho\cdot\mbu$ and the set of components of $K_{(j)} = ( \s_{j} \rho ) \rho^{-1}$, $j=1,\dots,m $.
\end{Proposition}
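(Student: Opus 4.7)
The plan is to reduce the claim to three already-established facts in the excerpt: the replacement rule \eqref{Eq: Replacement rule}, the shift law $\s_i I_{\mathbf{K},\mathbf{J}} = I_{\mathbf{K}+\mathbf{1}_i,\mathbf{J}+\mathbf{1}_i}$ exhibited just after \eqref{Ionu}, and the identity \eqref{Ionu1}. First I would let $F$ be any difference invariant. By the replacement rule, $F([\mathbf{u}]) = F(\iota([\mathbf{u}]))$, so $F$ is a function of the normalized invariants $I_{\mathbf{K},\mathbf{J}}$ alone. Next, iterating the shift law shows that each $I_{\mathbf{K},\mathbf{J}}$ is a shift of $I_{\mathbf{K}-\mathbf{J},\mathbf{0}}$; combined with \eqref{Ionu1}, this means it suffices to prove that for every $\mathbf{L}\in\bbz^m$ the expression $(\s_{\mathbf{L}}\rho)\rho^{-1}$ is a product of shifts of the components of the $K_{(j)}$ (or their matrix inverses), because then $I_{\mathbf{L},\mathbf{0}}$ is manifestly a function of $I_{\mathbf{0},\mathbf{0}}$ and shifts of components of $K_{(j)}$.

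The crux is therefore an inductive decomposition of $(\s_{\mathbf{L}}\rho)\rho^{-1}$ on $\|\mathbf{L}\|_{1}=\sum_i|l_i|$. For $\mathbf{L}=\mathbf{1}_i$ this is the definition of $K_{(i)}$. For the inductive step with $\mathbf{L}=\mathbf{L}'+\mathbf{1}_i$ one inserts $(\s_i\rho^{-1})(\s_i\rho)=\id$ to obtain
\begin{equation*}
(\s_{\mathbf{L}}\rho)\rho^{-1} \;=\; (\s_i\s_{\mathbf{L}'}\rho)(\s_i\rho^{-1})\,(\s_i\rho)\rho^{-1} \;=\; \s_i\!\left\{(\s_{\mathbf{L}'}\rho)\rho^{-1}\right\}\cdot K_{(i)},
\end{equation*}
which is of the required form by the inductive hypothesis. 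The commutativity $\s_i\s_j=\s_j\s_i$ guarantees that the order in which the unit shifts are peeled off does not matter (modulo group-theoretic identities); this is consistent with the sample factorisation $(\s_i\s_j\rho)\rho^{-1}=(\s_jK_{(i)})K_{(j)}$ given in the excerpt.

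The main obstacle is handling negative shifts, since the Maurer--Cartan invariants are defined only for forward unit shifts. This is resolved by rearranging $K_{(i)} = (\s_i\rho)\rho^{-1}$ to get
\begin{equation*}
(\s_i^{-1}\rho)\rho^{-1} \;=\; \bigl(\s_i^{-1}K_{(i)}\bigr)^{-1},
\end{equation*}
so inverse unit shifts contribute matrix inverses of shifted Maurer--Cartan invariants. Splitting $\mathbf{L} = \mathbf{L}_{+} - \mathbf{L}_{-}$ into its positive and negative parts and applying the induction in both directions then expresses $(\s_{\mathbf{L}}\rho)\rho^{-1}$ as a concatenation of shifts of components of the $K_{(j)}$ and of their inverses, completing the reduction. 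Substituting back into \eqref{Ionu1} and recalling that $I_{\mathbf{K},\mathbf{J}}$ is a shift of $I_{\mathbf{K}-\mathbf{J},\mathbf{0}}$ then gives the proposition.
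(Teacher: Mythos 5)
Your proposal is correct and follows essentially the same route as the paper: reduce any invariant to the normalized invariants via the replacement rule, shift down to $I_{\mathbf{L},\mathbf{0}}$, and factor $(\s_{\mathbf{L}}\rho)\rho^{-1}$ as a concatenation of shifted Maurer--Cartan matrices, exactly as in \eqref{Ionu1} and the sample factorisation $\iota(\s_i\s_j\rho)=(\s_jK_{(i)})K_{(j)}$. Your explicit induction on $\|\mathbf{L}\|_1$ and the treatment of negative shifts via $(\s_i^{-1}\rho)\rho^{-1}=(\s_i^{-1}K_{(i)})^{-1}$ merely spell out details the paper leaves implicit in the word ``concatenation''.
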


\begin{Definition}[syzygy]\label{Def: Syzygy}
	A syzygy on a set of invariants is a relation between the invariants that expresses functional dependency.
\end{Definition}

In other words, a syzygy on a set of invariants is a function of the invariants that becomes an identity when it is expressed in terms of the underlying variables $[\mbu]$. In general, there are syzygies between the invariants in Proposition \ref{Prop: geninv}, which lead to useful recurrence relations. Such relations enable all of these invariants to be expressed in terms of a small\footnote{The number of generating invariants depends on the number of dependent and independent variables, as well as on the group $G$ and the normalization that is used. Finding a relation between these quantities is an open problem.} generating set of invariants, $\kappa^\beta$, and their shifts, $\kappa^\beta_\mbJ:=\s_\mbJ\kappa^\beta$.

\begin{Example}\label{Example: Normalization}
	As a running example to illustrate the theory, we use the Lagrangian
	\begin{equation} \label{Eq: Lagrangian}
		\Lrm
		=
		\ln \biggl\vert \frac{ u_{1,0}-u_{0,1} }{ u_{1,1}-u_{0,0} } \biggr\vert,
	\end{equation}
	where $\mbn=\bigl(n^1,n^2\bigr)$ and $u_{i,j}$ represents the value of $u$ at $\bigl(n^1+i,n^2+j\bigr)$. The Euler--Lagrange equation is
	\begin{equation}\label{TodaEL}
		\E_{u}(\Lrm) = \frac{1}{u_{1,1}-u_{0,0}} - \frac{1}{u_{-1,1}-u_{0,0}} - \frac{1}{u_{1,-1}-u_{0,0}} + \frac{1}{u_{-1,-1}-u_{0,0}}=0,
	\end{equation}
	which is a Toda-type equation that is satisfied by all solutions of the autonomous dpKdV equation and the cross-ratio equation (see~\cite{hydon2014difference} for details). This equation is partitioned into two independent components, with $n^1+n^2$ being either odd or even.
	
	The Lagrangian \eqref{Eq: Lagrangian} has a six-parameter Lie group of variational symmetries, whose infinitesimal generators are linear combinations of
	\begin{align*}
		&{\bf v}_{1} = \p_{u_{0,0}}, \qquad {\bf v}_{2} = u_{0,0} \p_{u_{0,0}},
		\qquad {\bf v}_{3} = u_{0,0}^{2} \p_{u_{0,0}}, \notag\\
		&{\bf v}_{4} = (-1)^{n^1+n^2} \p_{u_{0,0}}, \qquad {\bf v}_{5} = (-1)^{n^1+n^2}u_{0,0} \p_{u_{0,0}},
		\qquad {\bf v}_{6} = (-1)^{n^1+n^2} u_{0,0}^{2} \p_{u_{0,0}},
	\end{align*}
	where we use $\p_z$ as shorthand for $\p/\p z$ from here on. The corresponding characteristics are
	\begin{gather*}
	Q_1=1,\qquad Q_2=u_{0,0},\qquad Q_3=u_{0,0}^2,\qquad Q_4=(-1)^{n^1+n^2},\\ Q_5=(-1)^{n^1+n^2}u_{0,0},\qquad Q_6=(-1)^{n^1+n^2}u_{0,0}^2,
	\end{gather*}
	and the prolonged generators are
	\begin{equation*}
		\operatorname{pr} {\bf v}_{r} = \bigl(\s_1^i\s_2^j Q_{r}\bigr) \p_{u_{i,j}} .
	\end{equation*}
Only the symmetries generated by linear combinations of $\mbv_1$, $\mbv_2$ and $\mbv_4$ leave the Lagrangian invariant. (The other generators produce divergence terms, although these can be absorbed without changing the Euler--Lagrange equation, by adding a divergence to the Lagrangian.)

Consider the two-parameter Lie group action generated by ${\bf v}_{1} $ and ${\bf v}_{2} $; this is
	\begin{equation*} 
		g\colon\ u_{i,j} \mapsto \widetilde{u}_{i,j} = b u_{i,j} + a,
	\end{equation*}
	which is defined for every $a \in \bbr$ and $b \in \bbr^{+}$. For the half-space $\mathcal{U} = \{\mathcal{M}\colon u_{1,1} > u_{0,0} \}$, we choose the normalization equations\footnote{For the other half-space, with $ u_{1,1} < u_{0,0}$, an appropriate normalization is $\widetilde{u}_{0,0} = 0 $ and $\widetilde{u}_{1,1} = -1 $.} (\ref{Eq: Normalization equations general}) to be
	\begin{equation*}
		\widetilde{u}_{0,0}=0, \qquad \widetilde{u}_{1,1} = 1.
	\end{equation*}
Then the frame $\rho$ is the group element with the parameters
\begin{equation*}
a=\frac{-u_{0,0}}{u_{1,1}-u_{0,0}} , \qquad b=\frac{1}{u_{1,1}-u_{0,0}} .
\end{equation*}
For this frame, the invariantization of $u_{i,j} $ is
\begin{equation*}
	\iota (u_{i,j}) = (\widetilde{u}_{i,j})\big|_{g=\rho}= \frac{u_{i,j}-u_{0,0}}{u_{1,1}-u_{0,0}} .
\end{equation*}
The basic Maurer--Cartan invariants are
\begin{equation*}
	\kappa=\iota (u_{1,0} )=\frac{u_{1,0}-u_{0,0}}{u_{1,1}-u_{0,0}} , \qquad \lambda=\iota (u_{0,1} )=\frac{u_{0,1}-u_{0,0}}{u_{1,1}-u_{0,0}} .
\end{equation*}
These two invariants generate all invariants $\iota ( u_{i,j} )$.

As an example, we show how to find $\iota ( u_{2,1} )$ in terms of these generating invariants and their shifts.
First shift $\lambda$ to involve $u_{2,1}$ and lower shifts of $\kappa$ and $\lambda$:
\begin{equation*}
	\lambda_{1,0}=\frac{u_{1,1}-u_{1,0}}{u_{2,1}-u_{1,0}} .
\end{equation*}
As shifts of invariants are invariant, the replacement rule yields
\begin{equation*}
	\lambda_{1,0}=\frac{\iota (u_{1,1})-\iota (u_{1,0})}{\iota (u_{2,1})-\iota (u_{1,0})}=\frac{1-\kappa}{\iota (u_{2,1})-\kappa} .
\end{equation*}
Rearranging,
\begin{equation*}
	\iota (u_{2,1}) =\kappa+\frac{1-\kappa}{\lambda_{1,0}} .
\end{equation*}
Similarly, the replacement rule gives
\[
\s_{1} \iota( u_{i,j})=\frac{u_{i+1,j}-u_{1,0}}{u_{2,1}-u_{1,0}}=\frac{\iota(u_{i+1,j})-\kappa}{\iota(u_{2,1})-\kappa} ,
\]
which leads to the first of two general recurrence relations:
\begin{align}
&\iota(u_{i+1,j})=\kappa+\left(\frac{1-\kappa}{\lambda_{1,0}} \right) \s_{1} \iota( u_{i,j}),\label{rr1}\\
&\iota (u_{i,j+1} )=\lambda+\left(\frac{1-\lambda}{\kappa_{0,1}} \right) \s_{2} \iota( u_{i,j}).\label{rr2}
\end{align}
Note that the invariantization operator $\iota$ does not commute with the shift operators.

The invariant $\iota(u_{2,2})$ can be calculated from the identity $\iota(u_{1,1})=1$ in two ways: either use~\eqref{rr1} first, then \eqref{rr2}, or vice versa. This leads to the following syzygy between the generating invariants and their shifts:
\begin{equation}\label{fundsyz}
\frac{(\lambda-1)(\kappa_{0,1}-1)}{\kappa_{0,1}\lambda_{1,1}} = \frac{(\kappa-1)(\lambda_{1,0}-1)}{\lambda_{1,0}\kappa_{1,1}}.
\end{equation}
The syzygy \eqref{fundsyz} arises because two dependent variables ($\kappa,\lambda$) are used instead of one ($u$).
\end{Example}

\subsection{Differential invariants and syzygies}

From Remark \ref{remt}, one can derive the Euler--Lagrange equations by regarding the variables $\uJalpha$ as depending smoothly on a continuous parameter~$t$. The same is true in the context of invariants, if one stipulates that
\begin{itemize}\itemsep=0pt
	\item $t$ is invariant under the Lie group action;
	\item every shift commutes with differentiation with respect to $t$ (as is required for differential-difference equations~\cite{peng2022transformations}).
\end{itemize}
This approach was used in Mansfield et al.~\cite{mansfield2019movinga} to invariantize the Euler--Lagrange equations for~O$\Delta$Es. We now extend it to P$\Delta$Es.

As $t$ is invariant, the Lie group action (for point transformations) extends to the first-order jet space of $\mathcal{M}$ as follows:
\begin{equation*}
g \cdot \frac{\de \uJalpha}{\de t} = \frac{\de (g \cdot \uJalpha)}{\de t}= \frac{\p (g \cdot u_{\bf J}^{\alpha})}{\p u_{\bf J}^{\delta}} \frac{\de u_\mbJ^{\delta}}{\de t} .
\end{equation*}
As the action is free and regular on $\mathcal{M}$, it will remain so on the jet space and we may use the same frame to find the first-order differential invariants. Let
\begin{equation}\label{Gonder1}
\sigma^\alpha:=\iota\left(\frac{\de u^{\alpha}}{\de t}\right)= \frac{\p (g \cdot u^{\alpha})}{\p u^{\delta}}\bigg\vert_{g=\rho} \frac{\de u^{\delta}}{\de t} ,\qquad \alpha=1,\dots,q.
\end{equation}
The Jacobian matrix of any Lie group transformation on $\mathcal{M}$ is necessarily non-singular, so \eqref{Gonder1} can be inverted, as follows:
\begin{equation*}
\frac{\de u^{\delta}}{\de t}=\theta^\delta_\alpha([\mbu]) \sigma^\alpha.
\end{equation*}
Here the coefficients $\theta^\delta_\alpha([\mbu])$ are the components of the inverse of the Jacobian matrix, evaluated on the moving frame. Consequently, for each $\mbJ$,
\begin{equation}\label{diffinv}
\iota\left(\frac{\de u_{\bf J}^{\delta}}{\de t}\right)=\iota\bigl(\s_\mbJ \bigl(\theta^\delta_\alpha([\mbu]) \sigma^\alpha\bigr)\bigr)=\iota\bigl(\s_\mbJ \theta^\delta_\alpha([\mbu])\bigr)\s_\mbJ \sigma^\alpha.
\end{equation}
So all of the first-order differential invariants can be written in terms of the generating invariants~$\kappa^\beta$, the generating differential invariants $\sigma^\alpha$, and their shifts. In particular,
\begin{equation*}
	\sigma^\delta=\iota\left(\frac{\de u^{\delta}}{\de t}\right)=\iota\bigl(\theta^\delta_\alpha([\mbu])\bigr) \sigma^\alpha,
\end{equation*}
so the invariantization of the matrix $\bigl(\theta^\delta_\alpha([\mbu])\bigr)$ is the identity matrix.

To calculate the invariantized Euler--Lagrange equations, it is necessary to determine the differential syzygies
\begin{equation*}
\frac{\de \kappa^{\beta}}{\de t} =\iota\left(\frac{\p\kappa^\beta}{\p u^\alpha_\mbJ}\right) \iota\left(\frac{\de u_{\bf J}^{\alpha}}{\de t}\right)= \mathcal{H}_{\alpha}^{\beta}\sigma^{\alpha}.
\end{equation*}
The terms $\mathcal{H}_{\alpha}^{\beta} $ are linear difference operators whose coefficients are functions of $\kappa^{\beta}$ and their shifts.

\begin{Example}[{Example \ref{Example: Normalization} cont.}] \label{Example: Differential-difference syzygy}
We now find the differential invariants for the running example.
The action of the group on the derivative $u_{i,j}' =\de u_{i,j}/\de t $ is
\begin{align*}
g \cdot u_{i,j}' = \frac{\p (g \cdot u_{i,j})}{\p u_{i,j}} u_{i,j}' =b u_{i,j}'.
\end{align*}
Therefore, the first-order differential invariants are
\begin{equation*}
\iota ( u_{i,j}' ) = \frac{u_{i,j}'}{u_{1,1}-u_{0,0}} .
\end{equation*}
The generating differential invariant is
\begin{equation*}
\sigma = \frac{u_{0,0}'}{u_{1,1}-u_{0,0}} ,
\end{equation*}
so \eqref{diffinv} amounts to
\[
\iota(u_{i,j}')=\{\iota(u_{i+1,j+1})-\iota(u_{i,j})\}\s_1^i\s_2^j\sigma.
\]
For instance,
\[
\iota(u_{0,1}')=\{\iota(u_{1,2})-\iota(u_{0,1})\}\s_2\sigma=\frac{1-\lambda}{\kappa_{0,1}} \s_2\sigma.
\]
The derivatives of the generating invariants are
\begin{align}
&\frac{\de \kappa}{\de t}= \frac{u_{1,0}'-u_{0,0}'}{u_{1,1}-u_{0,0}} - \frac{(u_{1,0}-u_{0,0}) (u_{1,1}'-u_{0,0}' )}{(u_{1,1}-u_{0,0} )^{2}} , \nonumber \\
&\frac{\de \lambda}{\de t}= \frac{u_{0,1}'-u_{0,0}'}{u_{1,1}-u_{0,0}} - \frac{(u_{0,1}-u_{0,0}) (u_{1,1}'-u_{0,0}' )}{(u_{1,1}-u_{0,0} )^{2}} .\label{Eq: Derivatives of the generating invariants with respect to t}
\end{align}
Invariantizing \eqref{Eq: Derivatives of the generating invariants with respect to t}, using the replacement rule (\ref{Eq: Replacement rule}), we obtain
\begin{align}
&\frac{\de \kappa}{\de t}= \iota ( u_{1,0}' )
-\kappa \iota ( u_{1,1}' ) + ( \kappa -1 ) \iota ( u_{0,0}' ),\nonumber
\\
&\frac{\de \lambda}{\de t}=\iota ( u_{0,1}' )
-\lambda \iota ( u_{1,1}' ) + ( \lambda -1 ) \iota ( u_{0,0}' ). \label{Eq: Differential-difference example}
\end{align}
Replacing each $ \iota ( u_{i,j}' )$ in (\ref{Eq: Differential-difference example}) by its expression in terms of $\sigma$ gives
\begin{equation*}
\frac{\de \kappa}{\de t} = \mathcal{H}_{\kappa} \sigma, \qquad \frac{\de \lambda}{\de t} = \mathcal{H}_{\lambda} \sigma,
\end{equation*}
where
\begin{align}
&\mathcal{H}_{\kappa}= \frac{1 - \kappa}{\lambda_{1,0}} \s_{1} -
\frac{\kappa(\kappa-1)(\lambda_{1,0}-1)}{\lambda_{1,0}\kappa_{1,1}} \s_{1}\s_{2}+ (\kappa -1 )\id,\nonumber \\
&\mathcal{H}_{\lambda}= \frac{1 - \lambda}{\kappa_{0,1}} \s_{2} -
\frac{\lambda(\kappa-1)(\lambda_{1,0}-1)}{\lambda_{1,0}\kappa_{1,1}} \s_{1}\s_{2}+ (\lambda -1 )\id.\label{Eq: Linear difference operators running}
\end{align}
\end{Example}

\section{The invariant formulation of the Euler--Lagrange equations} \label{Section: The invariant formulation of the Euler--Lagrange equations}

Here we show how to calculate the Euler--Lagrange equations, in terms of invariants, for a given invariant difference Lagrangian.
Any such Lagrangian, $ \Lrm ({\bf n}, [{\bf u}] ) $, can be written in terms of the generating invariants $ \kappa^{\beta} $ and their shifts \smash{$ \kappa_{\bf J}^{\beta} = \s_{\bf J} \kappa^{\beta} $}:
\begin{equation}\label{LLk}
\Lrm ({\bf n}, [{\bf u}] ) = L^{\boldsymbol \kappa} ({\bf n}, [ {\boldsymbol \kappa} ] ).
\end{equation}
The key result is the following proposition, which generalizes O$\Delta$E invariantization \cite{mansfield2019movinga} to \pdese.

\begin{Proposition}[invariant Euler--Lagrange equations] \label{Prop: Invariant Euler--Lagrange equations}
Suppose that the Lagrangian $\Lrm(\mbn,[\mbu])$ is invariant under an $R$-parameter Lie group of point transformations, so that \eqref{LLk} holds.
Given the differential syzygies,
$\de\kappa^{\beta}/\de t=\mathcal{H}_{\alpha}^{\beta}\sigma^{\alpha}$, the following identity holds:
\begin{equation} \label{Eq: Invariantized Euler--Lagrange equations}
\mathrm{E}_{u^{\alpha}} (\Lrm) \frac{\de u^\alpha}{\de t} = \bigl( \bigl(\mathcal{H}_{\alpha}^{\beta}\bigr)^{\dagger} \mathrm{E}_{\kappa^{\beta}}\bigl(L^{\boldsymbol \kappa} \bigr) \bigr) {\sigma^{\alpha}},
\end{equation}
where
\begin{equation*}
\mathrm{E}_{\kappa^{\beta}} = \s_{-\bf J} \frac{\p}{\p \kappa_{\bf J}^{\beta}}
\end{equation*}
is the difference Euler operator with respect to $ \kappa^{\beta} $.
The invariantization of the original system of Euler--Lagrange equations is
\begin{equation} \label{Eq: Invariantized Euler--Lagrange equations 2}
\iota ( \mathrm{E}_{u^{\alpha}}( \Lrm)) = \bigl(\mathcal{H}_{\alpha}^{\beta}\bigr)^{\dagger} \mathrm{E}_{\kappa^{\beta}} ( L^{\boldsymbol \kappa})=0,\qquad \alpha =1, \dots ,q.
\end{equation}
\end{Proposition}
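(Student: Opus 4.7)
The plan is to apply the $t$-trick of Remark \ref{remt}: regard $\mbu$ as smoothly depending on an invariant parameter $t$, so that $\boldsymbol\kappa$ and $\sigma$ inherit this dependence, and then differentiate the identity $\Lrm(\mbn,[\mbu]) = L^{\boldsymbol\kappa}(\mbn,[\boldsymbol\kappa])$ with respect to $t$ in two different ways. Equating the two resulting expressions will produce (\ref{Eq: Invariantized Euler--Lagrange equations}) modulo an exact difference divergence, and a short invariantization step then yields (\ref{Eq: Invariantized Euler--Lagrange equations 2}).

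First I would compute each form of $\de\Lrm/\de t$ via the chain rule and apply summation by parts (\ref{Eq: Sum by parts}), exactly as in the derivation of (\ref{Eq: Summation by parts of L}), to obtain
\begin{equation*}
\frac{\de\Lrm}{\de t} = \frac{\de u^\alpha}{\de t}\,\mathrm{E}_{u^\alpha}(\Lrm) + \mathrm{Div}(A_{\mbu}), \qquad \frac{\de L^{\boldsymbol\kappa}}{\de t} = \frac{\de \kappa^\beta}{\de t}\,\mathrm{E}_{\kappa^\beta}(L^{\boldsymbol\kappa}) + \mathrm{Div}(A_{\boldsymbol\kappa}).
\end{equation*}
Then I would substitute the syzygy $\de\kappa^\beta/\de t = \mathcal{H}_\alpha^\beta\sigma^\alpha$ and invoke the formal adjoint identity from Section \ref{Section: The difference variational calculus} on each linear difference operator $\mathcal{H}_\alpha^\beta$ to convert
\begin{equation*}
\mathrm{E}_{\kappa^\beta}(L^{\boldsymbol\kappa})\,\mathcal{H}_\alpha^\beta\sigma^\alpha = \sigma^\alpha\bigl((\mathcal{H}_\alpha^\beta)^\dagger\mathrm{E}_{\kappa^\beta}(L^{\boldsymbol\kappa})\bigr) + \mathrm{Div}(B).
\end{equation*}
Since $\Lrm = L^{\boldsymbol\kappa}$, equating the two expressions for $\de\Lrm/\de t$ and collecting the divergence contributions yields (\ref{Eq: Invariantized Euler--Lagrange equations}) in the variational sense, that is, up to an exact difference divergence whose sum over $\mbn$ vanishes.

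To obtain (\ref{Eq: Invariantized Euler--Lagrange equations 2}), I would substitute $\de u^\alpha/\de t = \theta^\alpha_\beta([\mbu])\sigma^\beta$ from (\ref{udash}) into the LHS of (\ref{Eq: Invariantized Euler--Lagrange equations}) and invariantize both sides. The right-hand side is already an invariant; on the left, using that $\iota(\theta^\alpha_\beta) = \delta^\alpha_\beta$ (the remark immediately after (\ref{diffinv0})) gives $\iota(\mathrm{E}_{u^\beta}(\Lrm))\,\sigma^\beta$. Because $\sigma^\beta$ corresponds via an invertible transformation to an arbitrary infinitesimal variation of $u^\beta$, equating coefficients of $\sigma^\beta$ gives (\ref{Eq: Invariantized Euler--Lagrange equations 2}).

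The main obstacle is the bookkeeping of the divergence terms and the justification for discarding them. At the formal level this is routine: summation by parts (\ref{Eq: Sum by parts}) is an exact identity, the Euler operators annihilate $\mathrm{im}(\mathrm{Div})$, and $L^{\boldsymbol\kappa}$ depends on only finitely many shifts of $\boldsymbol\kappa$, so the sums involved are finite. The essential new feature compared with the O\(\Delta\)E derivation in \cite{mansfield2019movinga} is that the divergence now takes the form $\mathrm{Div}(F) = \D_{n^k}F^k$ with summation over $k=1,\dots,m$, which makes explicit reconstruction of the $F^k$ non-unique but does not affect the variational identity itself.
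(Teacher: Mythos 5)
Your overall route is the paper's route: differentiate $\Lrm=L^{\boldsymbol\kappa}$ with respect to the invariant parameter $t$, sum by parts in both coordinate systems, insert the syzygy $\de\kappa^\beta/\de t=\mathcal{H}^\beta_\alpha\sigma^\alpha$, and move $\mathcal{H}^\beta_\alpha$ onto $\mathrm{E}_{\kappa^\beta}(L^{\boldsymbol\kappa})$ by the formal adjoint. However, there is a genuine gap at the decisive step. Proposition \ref{Prop: Invariant Euler--Lagrange equations} asserts the \emph{exact pointwise} identity \eqref{Eq: Invariantized Euler--Lagrange equations}, whereas you only obtain it ``modulo an exact difference divergence'' and explicitly leave it in that weaker, variational form. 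You cannot then derive \eqref{Eq: Invariantized Euler--Lagrange equations 2} by ``equating coefficients of $\sigma^\beta$'': the discarded remainder $\mathrm{Div}(A_{\mbu})-\mathrm{Div}(A_{\boldsymbol\kappa}+B)$ is itself a linear expression in the $\sigma^\beta$ (equivalently the $(u^\alpha)'$) \emph{and their shifts}, so coefficient matching is not legitimate while it is still present. Nor can you assert that the two divergences cancel: that statement is Corollary \ref{CLrel}, which in the paper is a \emph{consequence} of the Proposition, not an available input.

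The missing ingredient is the localization argument. In the paper one takes the difference of \eqref{Lu} and \eqref{Lk}, sums over $\mbn$ so that the divergence terms are annihilated, and then uses the fact that each $u^\alpha(\mbn,t)$ has arbitrary smooth $t$-dependence, chosen independently at each base point, to conclude that the summand vanishes at every $\mbn$ --- a difference analogue of the fundamental lemma of the calculus of variations. (Equivalently, since both sides are linear in the unshifted $(u^\alpha)'$ and their difference is a divergence in $[(u^\alpha)']$, applying the Euler operator with respect to $(u^\alpha)'$ annihilates the divergence and extracts the exact identity.) Only after \eqref{Eq: Invariantized Euler--Lagrange equations} is established exactly does your final paragraph work: substitute \eqref{udash}, invariantize using $\iota(\theta^\alpha_\beta)=\delta^\alpha_\beta$ and the replacement rule, and use the independence of the $\sigma^\alpha$ to obtain \eqref{Eq: Invariantized Euler--Lagrange equations 2}. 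Your passing remark that ``the Euler operators annihilate $\mathrm{im}(\mathrm{Div})$'' gestures at this, but it is never applied to close the argument, so as written the proof establishes a strictly weaker statement than the Proposition.
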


\begin{proof}
In the original coordinates,
\begin{equation}\label{Lu}
\frac{\de \Lrm}{\de t}=\frac{\p \Lrm}{\p \uJalpha} \frac{\de \uJalpha}{\de t}=\mathrm{E}_{u^\alpha}(\Lrm) \frac{\de u^\alpha}{\de t}+\operatorname{Div} (A_{\bf u} );
\end{equation}
here summation by parts has produced the difference divergence
\begin{equation}\label{Au}
\operatorname{Div} (A_{\bf u} )=\sum_\mbJ(\s_\mbJ-\id)\left(\s_{-\mbJ} \left(\frac{\p \Lrm}{\p \uJalpha}\right) (u^{\alpha} )'\right)=: \D_{n^{i}} \bigl(A_{\alpha}^{i} ({\bf n}, [{\bf u}]\bigr) (u^{\alpha} )' ),
\end{equation}
where each $A_\alpha^i$ is a linear difference operator. In the invariant coordinates,
\begin{align}
	\frac{\de L^{\boldsymbol \kappa}}{\de t}
	&=\frac{\p L^{\boldsymbol \kappa}}{\p \kappa^\beta_\mbK} \frac{\de \kappa^\beta_\mbK}{\de t}=\mathrm{E}_{\kappa^\beta}(L^{\boldsymbol \kappa}) \frac{\de \kappa^\beta}{\de t}+\operatorname{Div} (A_{\boldsymbol \kappa} )=\mathrm{E}_{\kappa^\beta}(L^{\boldsymbol \kappa}) \mathcal{H}^\beta_\alpha\sigma^\alpha+\operatorname{Div} (A_{\boldsymbol \kappa} )\nonumber\\
	&=\bigl\{\bigl(\mathcal{H}^\beta_\alpha\bigr)^\dagger\bigl(\mathrm{E}_{\kappa^\beta}(L^{\boldsymbol \kappa})\bigr)\bigr\}\sigma^\alpha+\operatorname{Div}(A_{\mathcal{H}}+A_{\boldsymbol \kappa}).\label{Lk}
\end{align}
Here, there are two contributions to the difference divergence:
\begin{gather*}
\operatorname{Div}(A_{\boldsymbol \kappa})=\sum_\mbK(\s_\mbK-\id)\left(\s_{-\mbK} \left(\frac{\p \Lrm}{\p \kappa^\beta_\mbK}\right)\bigl(\kappa^{\beta}\bigr)'\right)=: \D_{n^{i}} \bigl(F_{\beta}^{i}({\bf n}, [{\boldsymbol \kappa}]) \bigl(\kappa^{\beta}\bigr)'\bigr),\\
\operatorname{Div}(A_{\mathcal{H}})=\mathrm{E}_{\kappa^\beta}(L^{\boldsymbol \kappa}) \mathcal{H}^\beta_\alpha\sigma^\alpha-\bigl\{(\mathcal{H}^\beta_\alpha)^\dagger\bigl(\mathrm{E}_{\kappa^\beta}(L^{\boldsymbol \kappa})\bigr)\bigr\}\sigma^\alpha=: \D_{n^{i}} \bigl(H_{\alpha}^{i}({\bf n}, [{\boldsymbol \kappa}]) \sigma^{\alpha}\bigr),
\end{gather*}
where $F_{\beta}^{i}$ and $H_{\alpha}^{i}$ are linear difference operators. The difference between \eqref{Lu} and \eqref{Lk}, summed over $\mbn$ to annihilate the divergence terms, is
\[
0=\sum_\mbn\left(\frac{\de \Lrm}{\de t}-\frac{\de L^{\boldsymbol \kappa}}{\de t}\right)=\sum_\mbn\bigl(\mathrm{E}_{u^\alpha}(\Lrm) (u^\alpha)'-\bigl(\mathcal{H}^\beta_\alpha\bigr)^\dagger\bigl(\mathrm{E}_{\kappa^\beta}(L^{\boldsymbol \kappa})\bigr)\sigma^\alpha\bigr).
\]
Note that the invariants at a particular $\mbn$ are invariantized by the frame at that $\mbn$. Each~$u^\alpha(\mbn,t)$ has arbitrary (smooth) dependence on $t$; there is no link between $u^\alpha(\mbn,t)$ and $u^\alpha(\mathbf{m},t)$ for~${\mbn\neq\mathbf{m}}$. Therefore, \eqref{Eq: Invariantized Euler--Lagrange equations} holds for each $\mbn$, and so
\begin{equation*}
	\iota(\mathrm{E}_{u^\alpha}(\Lrm))\sigma^\alpha=\bigl(\mathcal{H}^\beta_\alpha\bigr)^\dagger\bigl(\mathrm{E}_{\kappa^\beta}(L^{\boldsymbol \kappa})\bigr)\sigma^\alpha.
\end{equation*}
Equation \eqref{Eq: Invariantized Euler--Lagrange equations 2} follows from the independence of the differential invariants $\sigma^\alpha$.
\end{proof}

\begin{Corollary}\label{CLrel}
	Under the conditions of Proposition {\rm \ref{Prop: Invariant Euler--Lagrange equations}}, using the notation in its proof,
	\begin{equation*}
		\operatorname{Div} (A_{\bf u} )=\operatorname{Div} (A_{\mathcal{H}}+A_{\boldsymbol \kappa} ).
	\end{equation*}
\end{Corollary}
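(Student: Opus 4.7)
The plan is to combine the pointwise identity \eqref{Eq: Invariantized Euler--Lagrange equations} with the fact that the Lagrangian is invariant, then read off the divergence equality directly from the derivations already given in the proof of Proposition \ref{Prop: Invariant Euler--Lagrange equations}.

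First I would record that since the Lagrangian is invariant under the group action and admits the representation \eqref{LLk}, the functions $\Lrm(\mbn,[\mbu])$ and $L^{\boldsymbol\kappa}(\mbn,[\boldsymbol\kappa])$ agree pointwise on each fibre $P_\mbn(\bbr^q)$. Hence, when $u^\alpha$ is viewed as depending smoothly on the parameter $t$ as in Remark \ref{remt}, the chain rule gives
\begin{equation*}
\frac{\de \Lrm}{\de t}=\frac{\de L^{\boldsymbol \kappa}}{\de t}.
\end{equation*}
Next I would subtract the two expressions for this common derivative that were already obtained in the proof of Proposition \ref{Prop: Invariant Euler--Lagrange equations}, namely \eqref{Lu} and \eqref{Lk}. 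This yields
\begin{equation*}
0 = \mathrm{E}_{u^\alpha}(\Lrm)\,\frac{\de u^\alpha}{\de t} - \left\{(\mathcal{H}^\beta_\alpha)^\dagger\mathrm{E}_{\kappa^\beta}(L^{\boldsymbol\kappa})\right\}\sigma^\alpha + \mathrm{Div}(A_{\bf u}) - \mathrm{Div}(A_{\mathcal H}+A_{\boldsymbol\kappa}).
\end{equation*}
Finally, invoking the pointwise identity \eqref{Eq: Invariantized Euler--Lagrange equations} from the proposition itself, the first two terms cancel, and we are left with $\mathrm{Div}(A_{\bf u}) = \mathrm{Div}(A_{\mathcal H}+A_{\boldsymbol\kappa})$, as claimed.

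The only subtlety worth flagging is that \eqref{Eq: Invariantized Euler--Lagrange equations} was itself derived (in the proof of the proposition) by summing over $\mbn$ and then using the arbitrariness of the $t$-dependence of $u^\alpha(\mbn,t)$ to strip off the summation and recover a pointwise statement; so one needs the pointwise form of \eqref{Eq: Invariantized Euler--Lagrange equations} in order to conclude a pointwise divergence equality here rather than merely an equality modulo terms summing to zero. Since that pointwise form is exactly what the proposition delivers, the corollary is immediate. No additional summation by parts or auxiliary identities are required; this is a bookkeeping consequence of \eqref{Lu}, \eqref{Lk} and \eqref{Eq: Invariantized Euler--Lagrange equations}.
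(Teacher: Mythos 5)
Your proposal is correct and is essentially the paper's own argument: the paper's proof is the one-line instruction to compare \eqref{Lu} and \eqref{Lk}, taking the pointwise identity \eqref{Eq: Invariantized Euler--Lagrange equations} into account, which is exactly the cancellation you carry out. Your remark about needing the pointwise (rather than summed) form of \eqref{Eq: Invariantized Euler--Lagrange equations} is a fair observation, but it is already what Proposition \ref{Prop: Invariant Euler--Lagrange equations} asserts, so nothing further is required.
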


\begin{proof}
	Compare \eqref{Lu} and \eqref{Lk}, taking \eqref{Eq: Invariantized Euler--Lagrange equations} into account.
\end{proof}
\begin{Example}[{Example \ref{Example: Normalization} cont.}]\label{Example: EL}
The Lagrangian (\ref{Eq: Lagrangian}) is written in terms of the generating invariants $ \kappa = \iota (u_{1,0})$ and $ \lambda = \iota(u_{0,1}) $ as
\begin{equation*}
L^{\boldsymbol \kappa} = \ln \big\vert \kappa -\lambda\big\vert.
\end{equation*}
Applying the Euler operators with respect to $\kappa$ and $\lambda$ gives
\begin{equation*}
\E_{\kappa} ( L^{\boldsymbol \kappa} ) =\frac{1}{\kappa - \lambda} ,\qquad
\E_{\lambda} ( L^{\boldsymbol \kappa} ) =\frac{-1}{\kappa - \lambda} .
\end{equation*}
From (\ref{Eq: Linear difference operators running}),
\begin{align*}
\mathcal{H}_{\kappa}^{\dagger} &= \frac{1 - \kappa_{-1,0}}{\lambda} \s_{1}^{-1} - \frac{\kappa_{-1,-1}(\kappa_{-1,-1}-1)(\lambda_{0,-1}-1)}{\kappa \lambda_{0,-1}} \s_{1}^{-1}\s_{2}^{-1}+ (\kappa -1 )\id, \\
\mathcal{H}_{\lambda}^{\dagger} &= \frac{1 - \lambda_{0,-1}}{\kappa} \s_{2}^{-1} - \frac{\lambda_{-1,-1}(\kappa_{-1,-1}-1)(\lambda_{0,-1}-1)}{\kappa \lambda_{0,-1}} \s_{1}^{-1}\s_{2}^{-1}+ (\lambda -1 )\id.
\end{align*}
By Proposition \ref{Prop: Invariant Euler--Lagrange equations}, the invariant Euler--Lagrange equation for the running example is
\begin{align*}
0&=\mathcal{H}_{\kappa}^{\dagger} \E_{\kappa} ( L^{\boldsymbol \kappa} )+\mathcal{H}_{\lambda}^{\dagger} \E_{\lambda} ( L^{\boldsymbol \kappa} )\\
 &= \frac{1 - \kappa_{-1,0}}{\lambda(\kappa_{-1,0}-\lambda_{-1,0})} -\frac{1 - \lambda_{0,-1}}{\kappa(\kappa_{0,-1}-\lambda_{0,-1})} - \frac{(\kappa_{-1,-1}-1)(\lambda_{0,-1}-1)}{\kappa \lambda_{0,-1}} +1.
\end{align*}
This is the invariantization of the Toda-type equation \eqref{TodaEL} given by the chosen normalization, which has the advantage that $L^{\boldsymbol{\kappa}}$ depends only on unshifted invariants.
\end{Example}

\section{Conservation laws} \label{Section: Conservation laws}

For ODEs \cite{gonccalves2012moving,gonccalves2013moving,mansfield2010practical}, PDEs \cite{gonccalves2016moving} and O$\Delta$Es \cite{mansfield2019movinga,mansfield2019movingb}, it has been shown that the conservation laws associated with an $R$-parameter Lie group of variational point symmetries are equivariant and can be written in terms of the invariants and the moving frame.
For P$\Delta$Es, we use the same reasoning as in the papers above to show that the $R$ conservation laws can be written in the equivariant form
\begin{equation*}
\D_{n^{i}} \big\lbrace V_{l}^{i}(\mbn,[\boldsymbol{\kappa}]) a_{r}^{l}(\rho)\big\rbrace=0, \qquad r=1,\dots, R,
\end{equation*}
where $a_{r}^{l}(\rho)$ are the components of the adjoint representation of $\rho $ and each $V_{l}^{i}(\mbn,[\boldsymbol{\kappa}])$ is invariant.

For a Lagrangian $\Lrm(\mbn,[\mbu])$ that is invariant under the one-parameter group generated by $\mbv_r$, the corresponding conservation law \eqref{Eq: Difference Noether's} given by Noether's theorem is
\begin{equation}\label{NoeCL}
\sum_{\bf J} ( \s_{\bf J}-\id )\left(Q_r^{\alpha}(\mbn,[\mbu]) {\s}_{-{\bf J}} \frac{\p \Lrm}{\p u_{{\bf J}}^{\alpha}} \right) =0.
\end{equation}
The left-hand side of \eqref{NoeCL} is almost the same as $\operatorname{Div}A_\mbu$ in \eqref{Au}, with the exception that $(u^\alpha)'$ in $A_\mbu$ is replaced by the characteristic component $Q_r^{\alpha}(\mbn,[\mbu])$. This replacement can be achieved by substituting the group parameter $\varepsilon^r$ for $t$ and evaluating the result at $\varepsilon^r=0$.

By Corollary \ref{CLrel}, the conservation law $\operatorname{Div}(A_\mbu)=0$ amounts to
\[
\operatorname{Div}(A_{\boldsymbol \kappa}+A_{\mathcal{H}})=\D_{n^{i}} \bigl(F_{\beta}^{i}({\bf n},[{\boldsymbol \kappa}]) \bigl(\kappa^{\beta}\bigr)'+H_{\alpha}^{i}({\bf n},[{\boldsymbol \kappa}])\sigma^{\alpha}\bigr)=0.
\]
If $t$ is the group parameter $ \varepsilon^r $ then $\bigl(\kappa^{\beta}\bigr)'= 0$, because each $\kappa^{\beta}$ is invariant.
Consequently, the conservation law given by Noether's theorem is
\begin{equation}\label{HCL}
	\D_{n^{i}} \bigl(H_{\alpha}^{i} ({\bf n}, [{\boldsymbol \kappa} ] )\sigma^{\alpha}\bigr)=0,
\end{equation}
where $\sigma^{\alpha}$ is the invariantization of the tangent vector to the group generated by $\mbv_r$, evaluated at $\varepsilon=0$.

\begin{Proposition}
Suppose that the conditions of Proposition {\rm \ref{Prop: Invariant Euler--Lagrange equations}} hold. If the linear difference operators in \eqref{HCL} are
\begin{equation*}
H_{\alpha}^{i} ({\bf n},{ [\boldsymbol \kappa ]} ) = \mathcal{C}^{i,\bf J}_{\alpha}({\bf n},{ [\boldsymbol \kappa ]}) \s_{\bf J},
\end{equation*}
then Noether's theorem gives the $R$ conservation laws,
\begin{equation*} 
\D_{n^{i}} \bigl\{\mathcal{C}_{\alpha}^{i,{\bf J}}({\bf n},{[\boldsymbol \kappa]}) \s_{\bf J}\lbrace \iota(Q^\alpha_s(\mbn, \mbu))a^s_r(\rho)\rbrace\bigr\} = 0,\qquad r=1,\dots, R.
\end{equation*}
\end{Proposition}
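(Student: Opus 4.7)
The plan is to start from the conservation law \eqref{HCL}, use the factored form of $H^i_\alpha$ given in the hypothesis, and rewrite $\sigma^\alpha$ equivariantly so that the frame $\rho$ enters only through the adjoint matrix $a^s_r(\rho)$. Once that decomposition is in hand, \eqref{Eq: Conservation laws formula} will follow by direct substitution.

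First, I would specialize \eqref{Gonder1} to the case $\de u^\delta/\de t=Q^\delta_r(\mathbf{n},\mathbf{u})$, which is the tangent at the identity to the one-parameter subgroup generated by $\mathbf{v}_r$. This gives
\begin{equation*}
\sigma^\alpha \,=\, \left.\frac{\p (g\cdot u^\alpha)}{\p u^\delta}\right|_{g=\rho} Q^\delta_r(\mathbf{n},\mathbf{u}).
\end{equation*}
Applying the adjoint identity \eqref{Eq: Adjoint identity 2} at $g=\rho$ recasts the right-hand side as $Q^\alpha_s(\mathbf{n}, \rho\cdot\mathbf{u})\,a^s_r(\rho)$; since $\rho\cdot\mathbf{u}=\iota(\mathbf{u})$, the replacement rule \eqref{Eq: Replacement rule} then yields
\begin{equation*}
\sigma^\alpha \,=\, \iota(Q^\alpha_s(\mathbf{n},\mathbf{u}))\,a^s_r(\rho).
\end{equation*}
Substituting this expression, together with $H^i_\alpha = \mathcal{C}^{i,\mathbf{J}}_\alpha \s_\mathbf{J}$, into \eqref{HCL} produces \eqref{Eq: Conservation laws formula} directly; letting $r$ range over $1,\ldots,R$ to index a basis of the symmetry algebra gives the claimed $R$ conservation laws.

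The decisive step, and the one I expect to be subtlest, is the use of the adjoint identity: it provides the algebraic factorization that separates invariant data from frame data, splitting the derivative of the (non-invariant) group action into an invariant factor $\iota(Q^\alpha_s)$ and a frame-dependent factor $a^s_r(\rho)$. Without this splitting one cannot read off the conservation law in the equivariant form $V^i_l\,a^l_r(\rho)$ advertised at the start of Section \ref{Section: Conservation laws}. Everything else is routine book-keeping; no further summation by parts or variational manipulation is required beyond what has already been performed in arriving at \eqref{HCL}, and the shift $\s_\mathbf{J}$ in the factorized $H^i_\alpha$ simply acts on the product $\iota(Q^\alpha_s)\,a^s_r(\rho)$ without any further disentanglement.
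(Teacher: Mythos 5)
Your proposal is correct and follows essentially the same route as the paper's proof: both substitute the characteristic $Q^\delta_r$ for $(u^\delta)'$ (the paper phrases this as differentiating the group orbit $\widehat{\mbu}(\mbn,\mbu,\varepsilon^r)$ with respect to $\varepsilon^r$ and evaluating at $\varepsilon^r=0$) and then invoke the adjoint identity \eqref{Eq: Adjoint identity 2} on the frame to obtain $\sigma^\alpha=\iota\left(Q^\alpha_s(\mbn,\mbu)\right)a^s_r(\rho)$, which is inserted into \eqref{HCL}. The only difference is presentational, not mathematical.
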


\begin{proof}
This proof is a slimmed-down analogue of its counterpart for O$\Delta$Es (see \cite{mansfield2019movinga}). We replace $t$ by the parameter $\varepsilon^r$ and let $\widehat{\mbu}(\mbn,\mbu,\varepsilon^r)$ be the orbit of the one-parameter local Lie group generated by $\mbv_r$. Let $\widehat{u}^\alpha_r$ denote $\de \widehat{u}^\alpha/\de \varepsilon^r$ and note that $\widehat{\mbu}(\mbn,\mbu,0)=\mbu$.
By the chain rule,
\[
\sigma^\alpha_r:=\iota (\widehat{u}^\alpha_r )=\left\{\frac{\p (g\cdot\widehat{u}^\alpha )}{\p \widehat{u}^\beta} \widehat{u}^\beta_r\right\}\bigg\vert_{g=\rho}.
\]
In particular, from \eqref{Eq: Adjoint identity 2},
\begin{align*}
\sigma^\alpha_r\big\vert_{\varepsilon^r=0}&=\left\{\frac{\p (g\cdot u^\alpha )}{\p u^\beta} Q^\beta_r(\mbn,\mbu)\right\}\bigg\vert_{g=\rho}
	=\bigl\{Q^\alpha_s(\mbn,g\cdot \mbu) a^s_r(g)\bigr\}\big\vert_{g=\rho}=\iota (Q^\alpha_s(\mbn, \mbu) )a^s_r(\rho).
\end{align*}
Substitute $\sigma^\alpha_r\big\vert_{\varepsilon^r=0}$ for $\sigma^\alpha$ in \eqref{HCL} to complete the proof.
\end{proof}

By the prolongation formula, the conservation laws amount to
\begin{equation*}
\D_{n^{i}} \bigl\{\mathcal{C}_{\alpha}^{i,{\bf J}}(\mbn,[\boldsymbol{\kappa}]) (\s_{\bf J} \iota (Q_{s}^{\alpha}))a_{r}^{s}(\rho_{\bf J})\bigr\}=0.
\end{equation*}
The adjoint representation is a Lie group representation, so
\begin{equation*}
a^{s}_{r}(\rho_{\mbJ})
= a^{s}_{l} \bigl(\rho_{\bf J}\rho^{-1}\bigr)
a_{r}^{l}(\rho)= a^{s}_{l} (\iota(\rho_{\bf J}))
a_{r}^{l}(\rho).
\end{equation*}
This leads to the following corollary.
\begin{Corollary}
The conservation laws for a difference frame may be written in the form
\begin{equation*}
\D_{n^{i}} \big\lbrace V_{l}^{i}(\mbn,[\boldsymbol{\kappa}]) a_{r}^{l}(\rho) \big\rbrace =0,
\end{equation*}
where
\begin{equation*}
V_{l}^{i}(\mbn,[\boldsymbol{\kappa}]) = \mathcal{C}_{\alpha}^{i,{\bf J}}(\mbn,[\boldsymbol{\kappa}]) (\s_{\bf J} \iota (Q_{s}^{\alpha}) ) a^{s}_{l} (\iota(\rho_{\bf J})).
\end{equation*}
\end{Corollary}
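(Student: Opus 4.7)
The plan is a direct algebraic reduction using the two ingredients already assembled in the text immediately preceding the Corollary: the prolonged form of the conservation law,
\[
\D_{n^{i}}\!\left\{\mathcal{C}_{\alpha}^{i,{\bf J}}(\mbn,[\boldsymbol{\kappa}])\left(\s_{\bf J}\iota(Q_{s}^{\alpha})\right) a_{r}^{s}(\rho_{\bf J})\right\}=0,
\]
and the group-homomorphism identity $a_{r}^{s}(\rho_{\bf J}) = a_{l}^{s}(\iota(\rho_{\bf J}))\,a_{r}^{l}(\rho)$. No induction or estimate is required; the Corollary is merely a rearrangement followed by a check that the rearranged coefficient is invariant.

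The first step is to substitute the factorization of $a_r^s(\rho_\mbJ)$ into the brace. Since $a_r^l(\rho)$ carries no $\mbJ$-dependence, it is common to every summand of the sum over $\mbJ$ implicit in the operator $\mathcal{C}_\alpha^{i,\mbJ}\s_\mbJ$ and pulls out of that sum, leaving $V_l^i(\mbn,[\boldsymbol{\kappa}])\,a_r^l(\rho)$ inside the brace with $V_l^i$ exactly as defined in the statement. Taking $\D_{n^i}$ of the resulting expression and invoking the preceding conservation law yields the claim.

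The one point that deserves verification is that $V_l^i$ really depends only on $\mbn$ and $[\boldsymbol{\kappa}]$. The coefficients $\mathcal{C}_\alpha^{i,\mbJ}$ are read off from $H_\alpha^i(\mbn,[\boldsymbol{\kappa}])$ via \eqref{AH} and \eqref{Eq: From divergence terms}, so they are already $\boldsymbol{\kappa}$-expressions; each $\iota(Q_s^\alpha)$ is invariant and shifts preserve invariance; and $\iota(\rho_\mbJ)=\rho_\mbJ\rho^{-1}$ is a concatenation of Maurer--Cartan invariants (cf.\ \eqref{Ionu1}), so $a_l^s(\iota(\rho_\mbJ))$ is also invariant. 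By Proposition \ref{Prop: geninv}, each factor is expressible in terms of the generating invariants $\kappa^\beta$ and their shifts, giving the stated form of $V_l^i$. I do not anticipate any substantive obstacle; the only subtlety is that $a_r^l(\rho)$ does depend on $\mbn$ through $\rho$, so it cannot be taken outside the difference divergence $\D_{n^i}$ --- only outside the $\mbJ$-sum --- and that is all the argument needs.
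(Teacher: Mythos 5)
Your proposal is correct and follows essentially the same route as the paper: it substitutes the representation identity $a_{r}^{s}(\rho_{\bf J})=a^{s}_{l}(\iota(\rho_{\bf J}))\,a_{r}^{l}(\rho)$ into the prolonged conservation law and factors $a_{r}^{l}(\rho)$ out of the implicit sums over ${\bf J},\alpha,s$ (while correctly keeping it inside $\D_{n^i}$), which is exactly the paper's two-line argument. Your additional check that $V^i_l$ is genuinely a function of $\mbn$ and $[\boldsymbol{\kappa}]$, via Proposition \ref{Prop: geninv} and the invariance of $\iota(\rho_{\bf J})=\rho_{\bf J}\rho^{-1}$, is a sound elaboration of what the paper leaves implicit.
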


\begin{Example}[{Example \ref{Example: Normalization} cont.}]
For the running example, the equivariant conservation laws are obtained as follows:
\begin{align*}
\operatorname{Div}(A_{\mathcal{H}})
={}&\E_{\kappa}(L^{\boldsymbol \kappa})\mathcal{H}_{\kappa} \sigma
+E_{\lambda}(L^{\boldsymbol \kappa})\mathcal{H}_{\lambda} \sigma
-\bigl\{(\mathcal{H}_{\kappa})^{\dagger}\E_{\kappa}(L^{\boldsymbol \kappa})
+(\mathcal{H}_{\lambda})^{\dagger}\E_{\lambda}(L^{\boldsymbol \kappa})\bigr\}\sigma\\
={}& (\s_{1} - \id) \left(\frac{1-\kappa_{-1,0}}{\lambda (\kappa_{-1,0}-\lambda_{-1,0})} \sigma \right) + (\s_{2}-\id) \left(\frac{\lambda_{0,-1}-1}{\kappa (\kappa_{0,-1}-\lambda_{0,-1})} \sigma \right)\\
& + (\s_{1}\s_{2} - \id) \left(- \frac{ (\kappa_{-1,-1}-1 ) (\lambda_{0,-1}-1 )}{\kappa \lambda_{0,-1}} \sigma\right)\\
={}&\D_{n^{1}}\left\{\frac{1-\kappa_{-1,0}}{\lambda (\kappa_{-1,0}-\lambda_{-1,0})} \sigma- \frac{(\kappa_{-1,0}-1)(\lambda-1)}{\kappa_{0,1} \lambda} \s_2\sigma \right\}\\
& +\D_{n^{2}}\left\{\frac{\lambda_{0,-1}-1}{\kappa (\kappa_{0,-1}-\lambda_{0,-1})} \sigma- \frac{(\kappa_{-1,-1}-1)(\lambda_{0,-1}-1)}{\kappa \lambda_{0,-1}} \sigma
\right\}.
\end{align*}
The invariantized (unprolonged) infinitesimals are $\iota (Q_{1} )=1 $ and $\iota (Q_{2} )=0$, so
\begin{equation*}
\iota (Q_s(\mbn, \mbu) )a^s_r(\rho)=a^1_r(\rho).
\end{equation*}
The adjoint action of $g\colon u\mapsto \widetilde{u}=bu+a$ on the infinitesimal generators gives $\mbv_1=b\widetilde{\mbv}_1$ and~${\mbv_2=-a\widetilde{\mbv}_1+\widetilde{\mbv}_2}$,
so the components of the adjoint matrix are
\[
a_1^1=b,\qquad a_1^2=0,\qquad a_2^1=-a,\qquad a_2^2=1.
\]
On the frame $\rho$ (at $(m,n)$), these components are
\begin{equation}\label{adframe}
a_1^1(\rho)=\frac{1}{u_{1,1}-u_{0,0}} ,\qquad a_1^2(\rho)=0,\qquad a_2^1(\rho)=\frac{u_{0,0}}{u_{1,1}-u_{0,0}} ,\qquad a_2^2(\rho)=1.
\end{equation}
Applying $\s_2$ to \eqref{adframe} gives the components on the frame at $(m,n+1)$. In particular,
\begin{align*}
a_1^1(\s_2\rho)
&=\frac{1}{u_{1,2}-u_{0,1}}= \frac{\kappa_{0,1}}{1-\lambda} a_1^1(\rho),\\
a_2^1(\s_2\rho)
&=\frac{u_{0,1}}{u_{1,2}-u_{0,1}}= \frac{\kappa_{0,1}}{1-\lambda} a_2^1(\rho)+\frac{\lambda\kappa_{0,1}}{1-\lambda} a_2^2(\rho).
\end{align*}
Consequently, in terms of the frame \eqref{adframe}, the equivariant versions of the conservation laws arising from Noether's theorem are
\begin{gather*}
	0=\D_{n^{1}}\left\{\left(\frac{1-\kappa_{-1,0}}{\lambda (\kappa_{-1,0}-\lambda_{-1,0})} + \frac{\kappa_{-1,0}-1}{ \lambda}\right)a_1^1(\rho) \right\}\\
\hphantom{0=}{} +\D_{n^{2}}\left\{\left(\frac{\lambda_{0,-1}-1}{\kappa (\kappa_{0,-1}-\lambda_{0,-1})} - \frac{(\kappa_{-1,-1}-1)(\lambda_{0,-1}-1)}{\kappa \lambda_{0,-1}}\right) a_1^1(\rho)
	\right\},\\
	0= \D_{n^{1}}\left\{\left(\frac{1-\kappa_{-1,0}}{\lambda (\kappa_{-1,0}-\lambda_{-1,0})} + \frac{\kappa_{-1,0}-1}{\lambda}\right)a_2^1(\rho)+(\kappa_{-1,0}-1) a_2^2(\rho)
	 \right\}\\
\hphantom{0=}{}+\D_{n^{2}}\left\{\left(\frac{\lambda_{0,-1}-1}{\kappa (\kappa_{0,-1}-\lambda_{0,-1})} - \frac{(\kappa_{-1,-1}-1)(\lambda_{0,-1}-1)}{\kappa \lambda_{0,-1}}\right) a_2^1(\rho)
\right\}.
\end{gather*}
\end{Example}

\section{Differential-difference structure} \label{Section: Differential-difference structure}

Differential-difference moving frames are set in a continuous space that embodies prolongation with respect to both derivatives and shifts (see \cite{peng2022transformations}). Just as for \pdese, the discrete independent variables, $\mbn =\bigl(n^1,\dots,n^m\bigr)$, and the corresponding shift operators are invariant. In general, the invariant derivative operators do not commute with one another or with the shift operators, a~problem that can be resolved with substantial technical machinery.\footnote{Moving frames for differential Euler--Lagrange equations can be expressed naturally in terms of the variational bicomplex \cite{kogan2003invariant}; for a constructive approach that lends itself to symbolic computation, see~\cite{gonccalves2016moving}.} However, such machinery is not needed for systems of \ddes for $\mbu=\bigl(u^1,\dots,u^q\bigr)\in\bbr^q$ with just one continuous independent variable, $x$, provided that the group action on $x$ is sufficiently simple. For consistency with our presentation of difference moving frames, we restrict attention to such systems and actions.

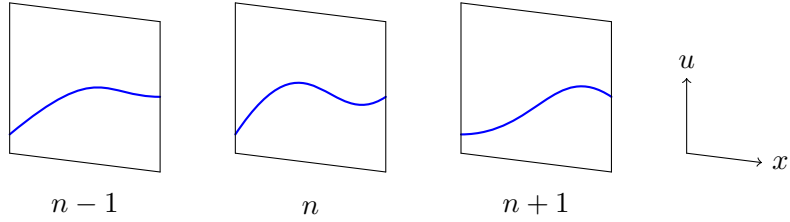
\begin{figure}[t]\centering
		\begin{tikzpicture}[line cap=round,line join=round,>=triangle 45]
			\draw[blue, thick] (0,0.5) .. controls (1,2) and (1.25,0.5).. (2,1);
			\draw (1,-0.7) node[anchor=south] {$n$};
			\draw (0,0.25) -- (2,0);
			\draw (2,0) -- (2,2);
			\draw (2,2) -- (0,2.25);
			\draw (0,2.25) -- (0,0.25);
			\draw[blue, thick] (3,0.5) .. controls (4,0.5) and (4.25,1.5).. (5,1);
			\draw (4,-0.7) node[anchor=south] {$n+1$};
			\draw (3,0.25) -- (5,0);
			\draw (5,0) -- (5,2);
			\draw (5,2) -- (3,2.25);
			\draw (3,2.25) -- (3,0.25);
			\draw[blue, thick] (-3,0.5) .. controls (-1.8,1.5) and (-1.75,1).. (-1,1);
			\draw (-2,-0.7) node[anchor=south] {$n-1$};
			\draw (-3,0.25) -- (-1,0);
			\draw (-1,0) -- (-1,2);
			\draw (-1,2) -- (-3,2.25);
			\draw (-3,2.25) -- (-3,0.25);
			\draw (7,0.125) node[anchor=west] {$x$};
			\draw (6,1.25) node[anchor=south] {$u$};
			\draw[-to] (6,0.25) -- (7,0.125);
			\draw[-to] (6,0.25) -- (6,1.25);
		\end{tikzpicture}
	\caption{A graph on (ordered) slices in the total space $\mathcal{T} = \bbr \times \bbz \times \bbr$.}
	\label{Fig: Differential-difference structure}
\end{figure}

The total space,
$\mathcal{T}=\bbr \times \bbz^m \times \bbr^q$, consists of a continuous \textit{slice}, ${\mathcal{T}}_\mbn = \bbr \times \lbrace{\mbn}\rbrace \times\bbr^q$, over each~$\mbn\in\bbz^m$. Consequently, each graph on $\mathcal{T}$ defined by $\mbu=\mathbf{f}(x,\mbn)$ restricts to a graph on every slice, as illustrated in Figure \ref{Fig: Differential-difference structure}. We consider only graphs that are smooth on each slice, which can be prolonged by differentiation as many times as needed.\footnote{If graphs are only locally smooth, restrict attention to neighbourhoods in which they are smooth.} The differential prolongation structure on the slice over any $\mbn$ is embodied by the infinite jet space, $J^\infty(\mathcal{T}_\mbn)$. This space has vertical coordinates $u^\alpha_{j;\mbzero}$, where $j$ denotes the number of derivatives with respect to $x$. In particular, $u^\alpha_{0;\mbzero}$ represents $u^\alpha$ at $\mbn$.

The differential prolongation structure over any fixed $\mbn$ is replicated over all other points ${\mbn+\mbK}$.
Consequently, one can use difference prolongation to construct the (differential-dif\-fer\-ence) \textit{prolongation space} over $\mbn$, denoted $P(J^\infty(\mcT_\mbn))$. This space has coordinates $(x,(u^\alpha_{j;\mbK}))$, where $u^\alpha_{j;\mbK}$ represents the value of $u^\alpha_{j;\mbzero}$ on the jet space over $\mbn+\mbK$.

Just as for difference equations, the horizontal translation $\mathrm{T}_{\mbI}$ maps $\mbn$ to $\mbn+\mbI$ without changing any other coordinates. The pullback of
\[
\mathrm{T}_\mbI\colon\ \prol\rightarrow P(J^\infty(\mcT_{\mbn+\mbI}))
\]
to $\prol$ is the shift operator $\s_{\mbI}$, which can be regarded as a product of unit forward shifts~$\s_i$, $i=1,\dots,m$, and their inverses. This acts on functions $f\in C^\infty(\prol)$ as follows:
\begin{equation*}
	\s_{\mbI} f( x,\mbn , ( u_{j;\mbK}^{\alpha} ) ) = f (x, \mbn + \mbI , ( u_{j;\mbK+ \mbI}^{\alpha} ) ).
\end{equation*}
All shift operators $\s_\mbK$ commute with one another and with the total derivative operator,
\begin{equation*}
D = \frac{\p}{\p x} + u_{j+1;k} \frac{\p}{\p u_{j;k}} .
\end{equation*}
For notational consistency, we define $D_{(j)}:=D^j$,
so the vertical coordinates on $\prol$ may be written as $u_{j;\mbK}=\s_{\mbK} D_{(j)} u^\alpha_{0;0}$.

The differential-difference divergence of a given $(1+m)$-tuple of functions in $C^\infty(\prol)$, $A=\bigl(A^0;A^1,\dots,A^m\bigr)$, is
\[
\operatorname{Div}(A):=DA^0 +\D_{n^i}A^i,
\]
where $\D_{n^i}= \s_i-\id$ is the $i^{\mathrm{th}}$ forward difference operator. Note: we sum over $i$ from $1$ to $m$ only. A linear differential-difference operator on the prolongation space $\prol$ is an operator of the form $\mcH=h^{j;\mbK}\s_{\mbK} D_{(j)}$, for given functions $h^{j;\mbK}$. The formal adjoint of $\mathcal{H}$ is the unique operator $\mathcal{H}^{\dagger}$ such that
\begin{equation*}
	f\mathcal{H}g - \bigl(\mathcal{H}^{\dagger} f\bigr) g\in \operatorname{im}(\operatorname{Div}).
\end{equation*}
Using the standard identities
\[
\s_{\mbK}^{\dagger}=\s_{-\mbK}, \qquad D_{(j)}^{\dagger} = (-D)_{(j)}:=(-1)^{j}D^j,
\]
one obtains
\[
\mathcal{H}^{\dagger} f=(-D)_{(j)}\s_{-\mbK} \bigl(h^{j;\mbK} f\bigr).
\]

\begin{Definition}
	A \textit{conservation law} of a given system of \ddes is a differential-difference divergence expression, $\mathcal{C}=\operatorname{Div} (A )$, such that $\mathcal{C}=0$ on all solutions of the system.
\end{Definition}

A given \dde defines a variety in the continuous space $\prol$, on which it is possible to construct moving frames that respect both the differential and difference structures. Every Lie group, $G$, of point transformations of the total space, whose prolongation to $\prol$ preserves these structures, consists of \textit{projectable} diffeomorphisms on each slice:
\[
g\colon\ \mathcal{T} \rightarrow \mathcal{T}, \qquad g \colon\ (x, \mbn, \mbu ) \mapsto (g\cdot x, g\cdot \mbn, g\cdot\mbu ):= (\widetilde{x}(x), \mbn, \widetilde{\mbu}(x,\mbn,\mbu) ).
\]
The projectability condition arises because mappings for which $\widetilde{x}$ depends on $\mbn$ or $\mbu$ are incompatible with the prolongation structure (see \cite{peng2022transformations} for details). This is a key distinction between~\ddes and PDEs, for which Lie point transformations do not have to be projectable.
For each $\mbn$, the mapping $g$ is a diffeomorphism, and therefore the Jacobian determinants,
\begin{equation*}
J_x:=\frac{\de \widetilde{x}}{\de x} ,\qquad J_\mbu:=\det \left(\frac{\p \widetilde{u}^\alpha}{\p u^\beta}\right),
\end{equation*}
are nonzero. The prolongation conditions give the action of $g$ on $\prol$ (recursively):
\[
g\cdot u^\alpha_{j+1;\mbzero}=\widetilde{u}^\alpha_{j+1;\mbzero}:=\frac{D \widetilde{u}^\alpha_{j;\mbzero}}{D\widetilde{x}} ,\qquad g\cdot u^\alpha_{j;\mbK}=\widetilde{u}^\alpha_{j;\mbK}:=\s_{\mbK} \widetilde{u}^\alpha_{j;\mbzero}.
\]
A tilde $\widetilde{\phantom{v}}$ over a function or operator denotes that $x$ and each $u^\alpha_{j;\mbK}$ are replaced by $\widetilde{x}$ and $\widetilde{u}^\alpha_{j;\mbK}$ respectively.

From here on, we consider $R$-parameter Lie groups of (projectable) point transformations, whose generators are of the form
\[
\mbv_r=\xi_r(x)\p_{x}+\eta^\alpha_r(x,\mbn,\mbu)\p_{u^\alpha} ,\qquad r=1,\dots,R.
\]
The adjoint action of $g$ on the Lie algebra spanned by $\mbv_1,\dots,\mbv_R$ satisfies the identities
\begin{equation}\label{ddad}
	\mbv_r=a^s_r(g)\widetilde{\mbv}_s ,\qquad r=1,\dots,R.
\end{equation}
The characteristic corresponding to $\mbv_r$ is $\mathbf{Q}_r=\bigl(Q^1_r,\dots,Q^q_r\bigr)$, where
\[ Q^\alpha_r=\eta^\alpha_r(x,\mbn,\mbu)-\xi_r(x) u^\alpha_{1;\mbzero} .
\]
This enables the differential-difference prolongation of $\mbv_r$ to be written as
\begin{equation}\label{prolv}
\operatorname{pr} \mbv_r=\xi_rD+X_r,\qquad \text{where}\quad X_r:=(\s_\mbK D_{(j)} Q^\alpha_r)\p_{u^\alpha_{j;k}} .
\end{equation}
The operator $X_r$ is the characteristic form of the generator; it acts only on the vertical coordinates of $\prol$. From here on, we write $\operatorname{pr} \mbv_r$ simply as $\mbv_r$; generators are assumed to be prolonged wherever this is needed. Note that \eqref{ddad} holds equally for the prolonged generators.

\begin{Lemma}
	Let G be a Lie group of projectable transformations of the total space $\mcT$, and let~$g\in G$. Using the above notation, the following identities hold:
	\begin{gather}
		\xi_rD=a_{r}^{s}(g) \widetilde{\xi}_s\widetilde{D};\label{ddadxid}\\
		X_r=a_{r}^{s}(g)\widetilde{X}_s;\label{ddadX}\\
		J_x \xi_r=\widetilde{\xi}_s a_{r}^{s}(g);\label{ddadxi}\\
		\left(\frac{\p \widetilde{u}^{\alpha}}{\p u^{\beta}}\right) Q_{r}^{\beta}= \widetilde{Q}_{s}^{\alpha} a_{r}^{s}(g).\label{ddadQ}
	\end{gather}
\end{Lemma}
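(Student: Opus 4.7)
My plan is to derive all four identities from the prolonged adjoint relation \eqref{ddad} together with the characteristic decomposition \eqref{prolv} and its tilde counterpart $\widetilde{\mbv}_s = \widetilde{\xi}_s\widetilde{D} + \widetilde{X}_s$. The idea is to split \eqref{ddad} into a ``horizontal'' piece (involving $D$) and a ``vertical'' piece (involving $X$); by projectability, $\widetilde{x}$ is a function of $x$ alone and so serves as a test function that isolates the horizontal piece, while $\widetilde{u}^\alpha_{0;\mbzero}$ will serve to extract the vertical piece.

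First I would apply $\mbv_r = \xi_r D + X_r$ to $\widetilde{x}$. The $X_r$-summand vanishes because $\widetilde{x}$ is independent of every $u^\beta_{j;\mbK}$, leaving $\mbv_r(\widetilde{x}) = \xi_r\, D\widetilde{x} = \xi_r J_x$, while $\widetilde{\mbv}_s(\widetilde{x}) = \widetilde{\xi}_s$. Substituting into \eqref{ddad} produces \eqref{ddadxi}. Next, I would establish the operator identity $D = J_x \widetilde{D}$ on functions expressed through the tilde coordinates. This follows from $D\widetilde{x} = J_x$ together with the prolongation formula $D\widetilde{u}^\alpha_{j;\mbK} = J_x\,\widetilde{u}^\alpha_{j+1;\mbK}$ (extended from $\mbK=\mbzero$ using commutativity of shifts with $D$) via the chain rule applied to $\widetilde{D} = \p_{\widetilde{x}} + \widetilde{u}^\alpha_{j+1;\mbK}\,\p_{\widetilde{u}^\alpha_{j;\mbK}}$. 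Multiplying by $\xi_r$ and substituting \eqref{ddadxi} yields \eqref{ddadxid}. Substituting the characteristic decompositions into \eqref{ddad} and then subtracting \eqref{ddadxid} delivers \eqref{ddadX} immediately.

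For \eqref{ddadQ} I would evaluate the operator identity \eqref{ddadX} on $\widetilde{u}^\alpha_{0;\mbzero}$. Since $\widetilde{u}^\alpha$ depends only on $(x,\mbn,\mbu)$, only the $Q^\beta_r \p_{u^\beta}$ summand of $X_r$ acts nontrivially, giving $(\p\widetilde{u}^\alpha/\p u^\beta)\,Q^\beta_r$ on the left; similarly $\widetilde{X}_s(\widetilde{u}^\alpha) = \widetilde{Q}^\alpha_s$ on the right, which reproduces \eqref{ddadQ}.

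The only step where projectability does nontrivial work (beyond the scalar relation $D\widetilde{x} = J_x$) is the operator identity $D = J_x \widetilde{D}$, and this is the part I expect to be the main obstacle: it requires that \emph{every} tilde coordinate $\widetilde{u}^\alpha_{j;\mbK}$ transforms under $D$ by the same scalar factor $J_x$, which in turn relies on the fact that $\widetilde{x}$ is independent of $\mbn$ and $\mbu$. Without projectability one would pick up extra terms in $D\widetilde{x}$ from $\mbu$-dependence, and the clean horizontal/vertical split underpinning the whole argument would fail. Once this operator identity is in hand, \eqref{ddadxi}--\eqref{ddadQ} are formal consequences of the single adjoint relation \eqref{ddad}.
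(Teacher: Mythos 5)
Your proof is correct and takes essentially the same route as the paper: both rest on the chain-rule identity $D=J_x\widetilde{D}$ (valid precisely because projectability makes $\widetilde{x}$ a function of $x$ alone), on applying the adjoint relation \eqref{ddad} to $\widetilde{x}$, on splitting the prolonged generators into horizontal and vertical parts to get \eqref{ddadX}, and on evaluating that vertical identity on $\widetilde{u}^{\alpha}$ to get \eqref{ddadQ}. The only difference is the order of the first two items — you establish \eqref{ddadxi} first and deduce \eqref{ddadxid} from it, while the paper proves \eqref{ddadxid} in one line and then reads off \eqref{ddadxi} by applying it to $\widetilde{x}$ — which is a cosmetic permutation of the same argument.
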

\begin{proof}
	The chain rule and \eqref{ddad} give
	\[
	\xi_rD=\xi_r \frac{\de \widetilde{x}}{\de x} \widetilde{D}=\mbv_r(\widetilde{x})\widetilde{D}=a^s_r(g)\widetilde{\mbv}_s(\widetilde{x})\widetilde{D}=a^s_r(g) \widetilde{\xi_s}\widetilde{D} ,
	\]
	which proves \eqref{ddadxid}. To obtain \eqref{ddadX}, substitute \eqref{prolv} into the prolongation of \eqref{ddad} and take \eqref{ddadxid} into account. Then apply \eqref{ddadxid} to $\widetilde{x}$ and \eqref{ddadX} to $\widetilde{u}^{\alpha}$ to prove \eqref{ddadxi} and \eqref{ddadQ}.
\end{proof}

The construction of a differential-difference moving frame is essentially the same as that of a difference moving frame.
Suppose that the Lie group $G$ acts smoothly, freely and regularly on $\mathcal{M}\subset\prol$, a finite prolongation space over $\mbn$ whose coordinates include all relevant variables, including at least one derivative \smash{$u^\alpha_{j;\mbK}$}, $j\geq 1$.
Let the cross-section and frame on~$\mathcal{M}$ be~$\mathcal{K}$ and $\rho$, respectively.
Such a moving frame uses the $\mbK^{\mathrm{th}}$ translate of $\mathcal{K}$ at every other base point $\mbn+\mbK$; the cross-section and frame at $\mbn+\mbK$ are represented on $\mathcal{M}$ by $\s_\mbK\mathcal{K}$ and~$\rho_{0;\mbK}=\s_\mbK\rho$, respectively. This construction extends immediately to $\prol$, because $\mathcal{M}$ has arbitrary dimension.

As usual, the invariantization (denoted by $\iota$) of a function, operator, etc., is obtained by evaluating the transformed quantity on the frame $\rho$.
The freedom to choose a cross-section leads to the possibility that $\iota(x)$ may depend on one or more of the variables $u^\alpha_{j;\mbK}$, because~$\widetilde{x}=g\cdot x$ depends on $x$ and the group parameters.

\begin{Lemma}
	The invariantized total derivative, $\mcD=\iota(D)$, commutes with all $($invariant$)$ shift operators $\s_\mbK$ if and only if $\mcJ=J_x|_{g=\rho} $ is a function of $x$ alone.
\end{Lemma}

\begin{proof}
	Evaluating the identity $D=J_x \widetilde{D}$ on $\rho$ gives $D=\mcJ\mcD$. Each $\s_i$ commutes with $D$, so
	\[
	[\mcD,\s_i]=\mcJ^{-1}D\s_i-\bigl(\s_i\mcJ^{-1}\bigr)\s_iD=\mcJ^{-1} [D,\s_i]-\bigl(D_{n^i}\mcJ^{-1}\bigr)\s_iD=-\bigl(D_{n^i}\mcJ^{-1}\bigr)\s_iD.
	\]
	The right-hand side is zero if and only if $\s_i\mcJ=\mcJ$, which requires that $\mcJ$ is independent of $n^i$ and $[\mbu]$. So a necessary and sufficient condition for every shift operator to commute with $\mcD$ is that $\mcJ$ depends on neither $\mbn$ nor $[\mbu]$.
\end{proof}

Every group parameter that occurs in $J_x$ also occurs in $\widetilde{x}$, which may also have one further parameter associated with translations in $x$. So a sufficient condition for $\mcD$ to commute with all shifts is that $\iota(x)$ depends on $x$ alone.\footnote{The only exception to this condition being necessary occurs when $G$ includes translations in $x$. As an example, consider $(\widetilde{x},\widetilde{u})=(x+a,e^au+b)$. The unusual normalization $\iota(u)=0,\ \iota(u_{1;\mathbf{0}})=1$ gives $a=-\ln (u_{1;\mathbf{0}})$, so $\iota(x)=x-\ln(u_{1;\mathbf{0}})$, but $\mcD$ commutes with all shifts because $\mcJ=1$. A normalization for which $\iota(x)$ depends on $x$ only is $\iota(x)=0$, $\iota(u)=0$.} Indeed, even if $\widetilde{x}$ includes translations, the requirement for the parameters in $J_x$ to depend on $x$ alone (on $\rho$) ensures that there exist normalizations for which~$\iota(x)$ depends on $x$ only: simply set $\iota(x)$ to be constant. These observations motivate the following general definition.

\begin{Definition} \label{Def: Projectable frame}
	Given a prolongation space\footnote{This definition includes jet spaces for differential equations, as projectability is also useful in this context.} $\mathcal{P}$ on which the Lie group $G$ acts smoothly, freely and regularly, let $\mathcal{B}$ denote the space coordinatized by the continuous independent variables, $\mbx=\bigl(x^1,\dots,x^p\bigr)$. A moving frame, $\rho$, is \textit{projectable} if $\iota(\mbx)=\rho\cdot\mbx$ is a function of $\mbx$ alone.
\end{Definition}

If $G$ consists of projectable transformations, then $\mathcal{B}$ is invariant. Thus, one can restrict attention to the action of $G$ on $\mathcal{B}$, ignoring those elements of $G$ that fix every $\mbx\in\mathcal{B}$. Let~${G_\mathcal{B}\colon\mathcal{B}\rightarrow \mathcal{B}}$ denote the resulting Lie group of restricted transformations.

\begin{Lemma}Let $G$ be a Lie group of projectable transformations that acts smoothly, freely and regularly on a prolongation space~$\mathcal{P}$. Then there exists a projectable moving frame on $\mathcal{P}$ if and only if $G_\mathcal{B}$ acts freely and regularly on $\mathcal{B}$.
\end{Lemma}

\begin{proof}Suppose that $G_\mathcal{B}$ acts freely and regularly on $\mathcal{B}$; smoothness is inherited from $G$. A~moving frame $\rho_{\mathcal{B}}$ on $\mathcal{B}$ can be constructed by imposing normalization conditions. This determines the group parameters that occur in $G_{\mathcal{B}}$ and ensures that $\rho_{\mathcal{B}}\cdot\mbx$ depends on $\mbx$ alone. Then $\rho_{\mathcal{B}}$ can be extended to a moving frame $\rho$ on $\mathcal{P}$ by choosing a normalization that determines the remaining parameters in $G$. Conversely, if there exists a projectable moving frame $\rho$ on $\mathcal{P}$, its restriction to $\mathcal{B}$ is a moving frame, whose existence requires $G_\mathcal{B}$ to act freely and regularly.
\end{proof}

In particular, if $x$ is the only continuous independent variable, a projectable moving frame exists only if $G_{\mathcal{B}}$ depends on at most one parameter, for otherwise the restricted action is not free. The moving frame is projectable if either $x$ is invariant, or $\widetilde{x}$ depends on just one group parameter that is determined by a normalization equation of the form $\iota(x)=\mathrm{const}$. Either of these conditions ensures that $\mathcal{D}$ commutes with $\s_\mbK$. We restrict attention to projectable moving frames for the remainder of this paper.

\section{Differential-difference variational calculus}
\label{Section: The differential-difference calculus of variations}

The \dde variational calculus in terms of invariants is derived in much the same way as its counterpart for \pdese, so we present the basic method concisely to avoid too much repetition. However, there are some important differences that stem from the group action on the continuous independent variable~$x$; we describe these in detail. Here and henceforth, square brackets around an expression denotes the expression and finitely many of its derivatives and shifts.

Let $G$ be an $R$-parameter Lie group of variational point symmetries for a given Lagrangian functional,
\[
\mathcal{L}=\sum_\mbn \int\Lrm (x,\mbn, [\mbu ] ) \de x,
\]
that leave the one-form $\Lrm (x,\mbn, [\mbu ] ) \de x$ invariant under the group action. Then
\begin{equation}\label{Ldxinv}
	\Lrm \de x=\widetilde{\Lrm} \de \widetilde{x}= \widetilde{\Lrm} J_x \de x.
\end{equation}
In particular, invariance under the transformations generated by $\mbv_r$ amounts to the condition
\begin{equation*}
\mbv_r(\Lrm)+\Lrm D\xi_r=0
\end{equation*}
(see \cite{olver2000applications} for details). A useful equivalent form of this condition is
\begin{equation}\label{ddQinv}
	X_r(\Lrm)+D(\Lrm \xi_r)=0,
\end{equation}
where $X_r$ is the generator in characteristic form (whose action on $x$ and $\de x$ is trivial).

The Euler--Lagrange equations are
\[
\E_{u^{\alpha}}(\Lrm):= \s_{-\mbK}(-D)_{(j)}\left(\frac{\p\Lrm}{\p u^\alpha_{j;\mbK}}\right)=0,\qquad \alpha = 1,\dots,m.
\]
Just as for \pdese, these equations can be obtained by allowing $[\mbu]$ (but \textit{not} $x$) to depend smoothly on a continuous invariant parameter $t\in\bbr$. Then, using $\phantom{.}'$ to denote the derivative with respect to $t$,
\begin{align}
\frac{\de}{\de t} ( \Lrm \de x )
&= \frac{\p \Lrm}{\p u^\alpha_{j;k}}
 (u^\alpha_{j;k})' \de x= \E_{u^\alpha} (\Lrm ) (u^\alpha_{0;\mbzero})' \de x
+ \operatorname{Div} (A_{\mbu} ) \de x,\label{ddEL2}
\end{align}
where $\operatorname{Div} (A_{\mbu} )$ consists of the divergence terms arising from the summation and integration by parts. Note that $X_r(\Lrm)$ is obtained from $\de\Lrm/\de t$ by setting $ [(u^\alpha_{0;\mbzero})'=Q^\alpha_r ]$. So, from \eqref{ddQinv} and~\eqref{ddEL2}, the Noether conservation law corresponding to invariance under $\mbv_r$ is
\begin{align} 0=-Q^\alpha_r\E_{u^\alpha}(\Lrm)&=\operatorname{Div}(A_{\mbu})\big|_{[(u^\alpha_{0;\mbzero})'=Q^\alpha_r]}-X_r(\Lrm)=\operatorname{Div}(A_{\mbu})\big|_{[(u^\alpha_{0;\mbzero})'=Q^\alpha_r]}+D(\Lrm \xi_r).\label{dduNoe}
\end{align}

Evaluating \eqref{Ldxinv} on the frame gives
\begin{equation*}
\Lrm \de x= \iota(\Lrm) \iota(\de x)=:L^{\boldsymbol{\kappa}}(\iota(x),\mbn,[\boldsymbol{\kappa}]) \iota(\de x),
\end{equation*}
where $\boldsymbol{\kappa}$ denotes the generating invariants that depend on $[\mbu]$ (and possibly also on $x$). Explicitly, $\iota(\de x)=\mcJ \de x$, so $L^{\boldsymbol{\kappa}}=\Lrm\mcJ^{-1}$. As the frame is projectable, $\mcJ$ depends only on $x$; in particular, if $x$ is invariant, $\mcJ=1$.

To obtain the invariantized Euler--Lagrange equations, differentiate the one-form $L^{\boldsymbol{\kappa}}\iota(\de x)$ with respect to $t$. Each $\kappa^\beta$ depends on $t$ through its dependence on $[\mbu]$, giving rise to the syzygies
$\de\kappa^{\beta}/\de t=\mathcal{H}_{\alpha}^{\beta}\sigma^{\alpha}$, where $\sigma^\alpha=\iota((\uoo^\alpha)')$ and each $\mcH_\alpha^\beta$ is a an invariant differential-difference operator. Specifically,
\[
\mathcal{H}_{\alpha}^{\beta}=\iota\left(\frac{\p \kappa^\beta}{\p u^\alpha_{j;\mbK}}\right) \mcD_{(j)}\s_\mbK,
\]
where $\mcD=\mcJ^{-1}D$ is the invariantized total derivative and \smash{$\mcD_{(j)}:=\mcD^j$}. Note that $\mcD$ commutes with each $\s_\mbK$; thus the invariant derivatives and shifts of $\kappa^\beta$ can be written as $\kappa^\beta_{j;\mbK}:=\mcD_{(j)}\s_\mbK\kappa^\beta$.
Integration by parts is straightforward: given two functions, $f$ and $g$,
\[
f(\mcD g) \iota(\de x)=f(Dg) \de x=\{D(fg)-(Df)g\} \de x =\{\mcD(fg)-(\mcD f)g\}\iota(\de x).
\]
As $\mcD(fg) \iota(\de x)$ is a divergence, define the adjoint of $\mcD$ relative to $\iota(\de x)$ to be $\mcD^{\boldsymbol{\dagger}}=-\mcD$. (The bold dagger $\boldsymbol{\dagger}$ distinguishes this adjoint from the standard adjoint, $\dagger$.) As the frame is projectable, $\s_\mbK^{\boldsymbol{\dagger}}=\s_\mbK^{\dagger}=\s_{-\mbK}$.
Therefore, using the notation $(-\mcD)_{(j)}:=(-1)^j\mcD_{(j)}$, we obtain
\begin{align}
	\frac{\de }{\de t} (L^{\boldsymbol \kappa}\iota(\de x) t)
	&=\frac{\p L^{\boldsymbol \kappa}}{\p \kappa^\beta_{j;\mbK}} \frac{\de \kappa^\beta_{j;\mbK}}{\de t} \iota(\de x)\nonumber\\
	&=\left((-\mcD)_{(j)}\s_{-\mbK} \frac{\p L^{\boldsymbol \kappa}}{\p \kappa^\beta_{j;\mbK}}\right)\frac{\de \kappa^\beta}{\de t} \iota(\de x)+\mcD\mathrm{iv} (A_{\boldsymbol \kappa} )\iota(\de x)\nonumber\\
	&=\mathrm{E}_{\kappa^\beta}(L^{\boldsymbol \kappa})\bigl(\mathcal{H}^\beta_\alpha\sigma^\alpha\bigr)\iota(\de x)+\mcD\mathrm{iv}(A_{\boldsymbol \kappa})\iota(\de x)\nonumber\\
	&=\bigl\{\bigl(\mathcal{H}^\beta_\alpha\bigr)^{\boldsymbol{\dagger}}(\mathrm{E}_{\kappa^\beta}(L^{\boldsymbol \kappa}))\bigr\}\sigma^\alpha\iota(\de x)+\mcD\mathrm{iv}(A_{\mathcal{H}}+A_{\boldsymbol \kappa})\iota(\de x).\label{ddLk}
\end{align}
The one-forms $\mcD\mathrm{iv} (A_{\boldsymbol \kappa} )\iota(\de x)$ and $\mcD\mathrm{iv}(A_\mcH)\iota(\de x)$ are defined by the above; $\mcD\mathrm{iv}$ has the same form as $\operatorname{Div}$, but
with $D$ replaced by $\mcD$. The following identity is useful:
\begin{equation}\label{Divid}
\mcD\mathrm{iv}\bigl(A^0;A^1,\dots,A^m\bigr) \iota(\de x)=\operatorname{Div}\bigl(A^0;\mcJ A^1,\dots,\mcJ A^m\bigr) \de x.
\end{equation}
Similarly to \pdese, the $i^{\mathrm{th}}$ component of $A_{\boldsymbol \kappa}$ (resp.\ $A_\mcH$) is of the form $F^i_\beta \bigl(\kappa^\beta\bigr)'$ \big(resp.\ $H^i_\alpha\sigma^\alpha$\big); here $F^i_\beta$ and $H^i_\alpha$ are invariant differential-difference operators.

\begin{Proposition}\label{ddEL}
	Suppose that the Lagrangian one-form $\Lrm(\mbn,x,[\mbu]) \de x$ is invariant under an $R$-parameter Lie group of point transformations.
	In the above notation,
	\begin{equation} \label{ddELid}
		\mathrm{E}_{u^{\alpha}} ( \Lrm ) (u^\alpha_{0;\mbzero})' \de x = \bigl( \bigl(\mathcal{H}_{\alpha}^{\beta}\bigr)^{\boldsymbol{\dagger}} \mathrm{E}_{\kappa^{\beta}} (L^{\boldsymbol \kappa} ) \bigr) {\sigma^{\alpha}}\iota(\de x),
	\end{equation}
	so the invariantized Euler--Lagrange \ddes are
	\begin{equation} \label{ddinvEL}
		\iota ( \mathrm{E}_{u^{\alpha}}( \Lrm )) = \bigl(\mathcal{H}_{\alpha}^{\beta}\bigr)^{\boldsymbol{\dagger}} \mathrm{E}_{\kappa^{\beta}} ( L^{\boldsymbol \kappa})=0,\qquad \alpha =1, \dots ,q.
	\end{equation}
	Furthermore,
	\begin{equation}\label{ddCLid}
		\operatorname{Div} (A_{\mbu} ) \de x=\mcD\mathrm{iv} (A_{\mathcal{H}}+A_{\boldsymbol \kappa} )\iota(\de x).
	\end{equation}
\end{Proposition}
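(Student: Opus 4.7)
The plan is to adapt the proof of Proposition \ref{Prop: Invariant Euler--Lagrange equations} to the differential-difference setting. The starting point is the frame-evaluated identity \eqref{ddLag}, namely $\Lrm\,\de x = L^{\boldsymbol\kappa}\,\iota(\de x)$, differentiated with respect to $t$. Projectability is essential here: it ensures $\mcJ$, and therefore $\iota(\de x) = \mcJ\,\de x$, is a function of $x$ alone, so both $\de x$ and $\iota(\de x)$ are independent of $t$. Substituting \eqref{ddEL2} on the left and \eqref{ddLk} on the right yields the one-form identity
\begin{equation*}
\E_{u^\alpha}(\Lrm)(u^\alpha_{0;\mbzero})'\,\de x + \mathrm{Div}(A_{\mbu})\,\de x = \bigl\{(\mathcal{H}_\alpha^\beta)^{\boldsymbol{\dagger}}\E_{\kappa^\beta}(L^{\boldsymbol\kappa})\bigr\}\sigma^\alpha\,\iota(\de x) + \mcD\mathrm{iv}(A_{\mathcal{H}}+A_{\boldsymbol\kappa})\,\iota(\de x).
\end{equation*}

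To eliminate the divergence terms, I would integrate over $x$ and sum over $\mbn$ against test variations $[(u^\alpha)']$ of compact support, using \eqref{Divid} to rewrite the right-hand $\mcD\mathrm{iv}(\cdots)\,\iota(\de x)$ as an ordinary $\mathrm{Div}(\cdots)\,\de x$. The $x$-piece then vanishes by the fundamental theorem of calculus (compact support), and the discrete pieces vanish by telescoping. Because the variations can be chosen independently at each $(x,\mbn)$, the resulting integrated equality upgrades to the pointwise one-form identity \eqref{ddELid}.

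With \eqref{ddELid} established, invariantize both sides. The right-hand side is already invariant; the left transforms as $\iota\bigl(\E_{u^\alpha}(\Lrm)(u^\alpha_{0;\mbzero})'\,\de x\bigr) = \iota(\E_{u^\alpha}(\Lrm))\,\sigma^\alpha\,\iota(\de x)$, using that $\iota$ distributes over products (it is pointwise evaluation on the frame) and that $\iota((u^\alpha_{0;\mbzero})') = \sigma^\alpha$. Cancelling the common factor $\iota(\de x)$ and invoking the arbitrariness of $\sigma^\alpha$ yields $\iota(\E_{u^\alpha}(\Lrm)) = (\mathcal{H}_\alpha^\beta)^{\boldsymbol{\dagger}}\E_{\kappa^\beta}(L^{\boldsymbol\kappa})$, which vanishes at extrema; this is \eqref{ddinvEL}. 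Substituting \eqref{ddELid} back into the displayed equation cancels the leading terms on both sides and leaves $\mathrm{Div}(A_{\mbu})\,\de x = \mcD\mathrm{iv}(A_{\mathcal{H}}+A_{\boldsymbol\kappa})\,\iota(\de x)$, which is \eqref{ddCLid}.

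The main technical obstacle is the pointwise upgrade after integration: one must verify that boundary terms in $x$ genuinely vanish for the admissible variations, and that compactly supported $(u^\alpha)'$ varied independently at a single $\mbn$ form a family dense enough to isolate the pointwise equation. A related subtlety is ensuring that the adjoint move producing $(\mathcal{H}_\alpha^\beta)^{\boldsymbol{\dagger}}$ in \eqref{ddLk} is legitimate when paired with $\iota(\de x)$ rather than $\de x$; the commutativity $[\mcD,\s_\mbK]=0$ guaranteed by projectability is what makes the combined integration-by-parts / summation-by-parts computation close up cleanly against the invariant measure.
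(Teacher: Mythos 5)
Your proposal is correct and follows essentially the same route as the paper: differentiate $\Lrm\,\de x = L^{\boldsymbol \kappa}\iota(\de x)$ with respect to $t$, sum over $\mbn$ and integrate to annihilate the divergence terms via \eqref{Divid}, use the independence of the variations at each base point to deduce the pointwise identity \eqref{ddELid}, then invariantize and invoke the independence of the $\sigma^{\alpha}$ for \eqref{ddinvEL}, and equate the right-hand sides of \eqref{ddEL2} and \eqref{ddLk} for \eqref{ddCLid}. The technical points you flag (projectability making $\mcD$ commute with the shifts and legitimizing the $\boldsymbol{\dagger}$ adjoint relative to $\iota(\de x)$) are precisely the ones the paper settles in the discussion preceding the proposition.
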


\begin{proof}
The proof is essentially the same as for \pdese. In view of \eqref{Divid},
\[
0=\sum_\mbn\int \frac{\de}{\de t}( \Lrm \de x-L^{\boldsymbol \kappa}\iota(\de x))=\sum_\mbn\int \E_{u^\alpha}(\Lrm) (u^\alpha_{0;\mbzero})' \de x -\bigl\{\bigl(\mathcal{H}^\beta_\alpha\bigr)^{\boldsymbol{\dagger}}(\mathrm{E}_{\kappa^\beta}(L^{\boldsymbol \kappa}))\bigr\}\sigma^\alpha\iota(\de x).
\]	
This holds for arbitrary functions $(u^\alpha_{0;\mbzero})'$, which are independent at each base point $\mbn$. Therefore, \eqref{ddELid} follows, from which \eqref{ddinvEL} is obtained by invariantizing and using the independence of the functions $\sigma^\alpha$. Equation \eqref{ddCLid} is derived by equating the right-hand sides of \eqref{ddEL2} and \eqref{ddLk}, taking \eqref{ddELid} into account.
\end{proof}

In the original variables, there are two types of contribution to each Noether conservation law \eqref{dduNoe} for which $\xi_r\neq 0$. The first type arises from integration and summation by parts, so can be treated in much the same way as for \pdese. However, unlike the \pde case, $A_{\boldsymbol{\kappa}}$ cannot be neglected.
Each fundamental invariant $\kappa^\beta$ satisfies $\mbv_r\bigl(\kappa^\beta\bigr)=0$. However, only the vertical variables $[\mbu]$ depend on $t$, so the characteristic form of the generator is used. If $t=\varepsilon^r$, then~\smash{$\bigl(\kappa^\beta\bigr)'$}, evaluated at $\varepsilon^r=0$, reduces to
\begin{equation}\label{ddXkappa}
	X_r\bigl(\kappa^\beta\bigr)=\mbv_r\bigl(\kappa^\beta\bigr)-\xi_r D\bigl(\kappa^\beta\bigr)=-\xi_r D\bigl(\kappa^\beta\bigr)
	=-\mcD\bigl(\kappa^\beta\bigr) \iota(\xi_s) a_r^s(\rho).
\end{equation}
The last equality arises from \eqref{ddadxid}, evaluated on the moving frame $\rho$.

The counterpart of the remaining term in \eqref{dduNoe}, namely $D(\Lrm \xi_r)$, is derived as follows. From \eqref{ddadxi}~and~\eqref{Ldxinv},
\[
\Lrm \xi_r=\Lrm J_x^{-1}\widetilde{\xi}_s a_{r}^{s}(g)=\widetilde{\Lrm} \widetilde{\xi}_s a_{r}^{s}(g),
\]
for all $g\in G$; on the moving frame, this amounts to
\[
\Lrm \xi_r=L^{\boldsymbol{\kappa}}\iota(\xi_s) a_{r}^{s}(\rho).
\]
Consequently,
\begin{equation}\label{ddrem}
	D(\Lrm \xi_r) \de x=\mcD (L^{\boldsymbol{\kappa}}\iota(\xi_s) a_{r}^{s}(\rho) )\iota(\de x).
\end{equation}

Combining all of the above, we obtain the following formulation of the Noether conservation laws.

\begin{Proposition}\label{Prop: ddinvNoe}
Suppose that the conditions of Proposition {\rm \ref{ddEL}} hold. Then Noether's theorem gives the $R$ conservation laws
\begin{equation}
	\mcD\mathrm{iv}(A_r) \iota(\de x)=0,\qquad r=1,\dots,R,\label{ddCLaws}
\end{equation}
whose components are
\begin{align}
	&	A_r^0=H_{\alpha}^{0}\lbrace \iota (Q^\alpha_s )a^s_r(\rho)\rbrace-F_{\beta}^{0}\big\lbrace \mcD\bigl(\kappa^\beta\bigr) \iota(\xi_s) a^s_r(\rho)\big\rbrace+L^{\boldsymbol{\kappa}}\iota(\xi_s) a_{r}^{s}(\rho),\label{ddCL0}\\
	&	A_r^i=H_{\alpha}^{i}\lbrace \iota (Q^\alpha_s )a^s_r(\rho)\rbrace-F_{\beta}^{i}\big\lbrace \mcD\bigl(\kappa^\beta\bigr) \iota(\xi_s) a^s_r(\rho)\big\rbrace,\qquad i=1,\dots,m.\label{ddCLi}
\end{align}
\end{Proposition}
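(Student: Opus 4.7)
The plan is to start from the Noether conservation law \eqref{dduNoe}, identify $t$ with the group parameter $\varepsilon^r$ evaluated at $\varepsilon^r=0$, and pass to the invariant formulation by applying Proposition \ref{ddEL} term by term. Equation \eqref{ddCLid} already rewrites $\mathrm{Div}(A_{\mbu})\,\de x$ as $\mcD\mathrm{iv}(A_{\mathcal{H}}+A_{\boldsymbol \kappa})\,\iota(\de x)$. What remains is to specialize $\sigma^\alpha$ and $(\kappa^\beta)'$ to the values induced by the symmetry generator $\mbv_r$ in characteristic form, to rewrite the additional term $D(\Lrm\,\xi_r)\,\de x$ using \eqref{ddrem}, and then to read off the coefficients of $\mcD$ and of each $\D_{n^i}$ to obtain \eqref{ddCL0} and \eqref{ddCLi}.

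First I would establish the differential-difference analogue of the identity used for \pdese: when the variation is $(u^\alpha_{0;\mbzero})'=Q^\alpha_r$, one has $\sigma^\alpha\big|_{\varepsilon^r=0}=\iota(Q^\alpha_s)\,a^s_r(\rho)$. This follows by substituting into the Jacobian representation of $\iota$ from \eqref{Gonder1}, restricted to the unshifted dependent variable, and then invoking the adjoint identity \eqref{ddadQ} evaluated on the frame. The $A_{\mathcal{H}}$ contribution $H^i_\alpha\sigma^\alpha$ accordingly becomes $H^i_\alpha\{\iota(Q^\alpha_s)\,a^s_r(\rho)\}$ in every direction $i=0,1,\dots,m$. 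Next, \eqref{ddXkappa} already furnishes $(\kappa^\beta)'\big|_{\varepsilon^r=0}=-\mcD(\kappa^\beta)\,\iota(\xi_s)\,a^s_r(\rho)$, so the $A_{\boldsymbol\kappa}$ contribution $F^i_\beta(\kappa^\beta)'$ becomes $-F^i_\beta\{\mcD(\kappa^\beta)\,\iota(\xi_s)\,a^s_r(\rho)\}$. Finally, \eqref{ddrem} converts the horizontal-transport term $D(\Lrm\,\xi_r)\,\de x$ directly into $\mcD(L^{\boldsymbol{\kappa}}\iota(\xi_s)\,a^s_r(\rho))\,\iota(\de x)$, which is a pure $\mcD$-divergence with no $\D_{n^i}$ contribution.

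Summing the three specialized contributions and moving everything to one side of \eqref{dduNoe} (multiplied by $\de x$) expresses the conservation law as $\mcD\mathrm{iv}(A_r)\,\iota(\de x)=0$. The $0$-component $A_r^0$ collects the $H^0_\alpha$ piece, the $F^0_\beta$ piece, and the leftover $L^{\boldsymbol\kappa}\iota(\xi_s)\,a^s_r(\rho)$ from \eqref{ddrem}, producing \eqref{ddCL0}; the spatial components $A_r^i$ contain only the $H^i_\alpha$ and $F^i_\beta$ pieces, yielding \eqref{ddCLi}.

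The main obstacle, and the key difference from the \pde case treated earlier, is that the contribution of $A_{\boldsymbol\kappa}$ cannot be discarded. Although each generating invariant $\kappa^\beta$ satisfies $\mbv_r(\kappa^\beta)=0$, the variational flow uses the characteristic form $X_r=\mbv_r-\xi_r D$, so the horizontal part $\xi_r D$ produces the nonzero correction recorded in \eqref{ddXkappa}. The careful bookkeeping of these $\xi_r$-terms, both in $(\kappa^\beta)'$ and in the $D(\Lrm\,\xi_r)$ term, requires that the same factor $\iota(\xi_s)\,a^s_r(\rho)$ emerge consistently in both places; this in turn depends crucially on projectability of the frame, which ensures that $\mcJ$ is a function of $x$ alone and that $\mcD$ commutes with every shift $\s_\mbK$, so that the adjoint identities $\mcD^{\boldsymbol{\dagger}}=-\mcD$ and $\s_\mbK^{\boldsymbol{\dagger}}=\s_{-\mbK}$ used in Proposition \ref{ddEL} remain available.
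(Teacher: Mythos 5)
Your proposal is correct and follows essentially the same route as the paper: replace $t$ by $\varepsilon^r$ and evaluate at $\varepsilon^r=0$, substitute $\sigma^\alpha\big|_{\varepsilon^r=0}=\iota(Q^\alpha_s)\,a^s_r(\rho)$ (via the adjoint identity on the frame) into $A_{\mathcal{H}}$, replace $(\kappa^\beta)'$ in $A_{\boldsymbol\kappa}$ using \eqref{ddXkappa}, and add the term \eqref{ddrem}. Your additional remarks on why $A_{\boldsymbol\kappa}$ survives and on the role of projectability match the discussion the paper gives just before the proposition.
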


\begin{proof}
	For each $r$ in turn, replace $t$ by $\varepsilon^r$ and $u^\alpha$ by
	\begin{equation*}
		\widehat{u}^\alpha=u^\alpha+\varepsilon^r Q^\alpha_r(\mbn,x,[\mbu])+\mathcal{O}\bigl((\varepsilon^r)^2\bigr)
	\end{equation*}
	(prolonged as necessary), and evaluate the results at $\varepsilon^r=0$. Using the same reasoning as for~\pdese,
	\begin{equation*}
		\sigma^\alpha_r\big\vert_{\varepsilon^r=0}=\left\{\frac{\p (g\cdot u^\alpha)}{\p u^\beta} Q^\beta_r\right\}\bigg\vert_{g=\rho}
		=\bigl\{\widetilde{Q}^\alpha_s a^s_r(g)\bigr\}\bigg\vert_{g=\rho}=\iota(Q^\alpha_s)a^s_r(\rho),
	\end{equation*}
	which is substituted for $\sigma^\alpha$ in $A_\mcH$. The proof is completed by replacing $\bigl(\kappa^\beta\bigr)'$ in $A_{\boldsymbol{\kappa}}$ by the right-hand side of \eqref{ddXkappa}, and adding the remaining term \eqref{ddrem}.
\end{proof}

\begin{Corollary}
	Each component of the conservation laws \eqref{ddCLaws} is equivariant with respect to the moving frame $\rho$, because there exist functions $f^i_s$ of the invariants such that
\begin{equation*}
		A^i_r=f^i_s(\mbn,\iota(x),[\boldsymbol{\kappa}]) a_r^s(\rho),\qquad i=0,\dots, m,\quad r=1,\dots, R.
\end{equation*}
\end{Corollary}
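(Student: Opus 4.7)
The goal reduces to showing that every invariant linear differential--difference operator sends an expression of the form $g\cdot a^s_r(\rho)$, with $g$ invariant, to a finite sum of expressions of the same form. Granting this, apply the claim to each term in \eqref{ddCL0} and \eqref{ddCLi}: the operators $H^i_\alpha$ act on $\iota(Q^\alpha_s)\,a^s_r(\rho)$ and the $F^i_\beta$ act on $\mcD(\kappa^\beta)\,\iota(\xi_s)\,a^s_r(\rho)$, both of which have invariant prefactor, and the extra term $L^{\boldsymbol{\kappa}}\iota(\xi_s)\,a^s_r(\rho)$ in $A^0_r$ is already equivariant. The operators $H^i_\alpha$ and $F^i_\beta$ are linear combinations of monomials $\mcD_{(j)}\s_{\mbK}$ with invariant coefficients, so by the Leibniz rule the proof reduces to two local identities: one for a single shift and one for the invariant derivative $\mcD$, acting on $g\cdot a^s_r(\rho)$.

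The shift identity is a direct continuation of the corollary preceding Proposition \ref{Prop: ddinvNoe}: using the homomorphism property $a^s_r(\gamma\delta)=a^s_l(\gamma)\,a^l_r(\delta)$ of the adjoint representation applied to $\s_\mbK\rho=\iota(\s_\mbK\rho)\,\rho$, one finds
\begin{equation*}
\s_\mbK\bigl(g\cdot a^s_r(\rho)\bigr)=(\s_\mbK g)\,a^s_l\!\bigl(\iota(\s_\mbK\rho)\bigr)\,a^l_r(\rho),
\end{equation*}
with both factors in front of $a^l_r(\rho)$ invariant. For the derivative, introduce the infinitesimal Maurer--Cartan invariant $\nu:=\mcD(\rho)\,\rho^{-1}$, a Lie-algebra-valued quantity whose invariance follows from right-equivariance of $\rho$ together with projectability of the frame (which ensures that $\mcD$ acts consistently on $\rho$ through its dependence on $[\mbu]$ and $x$). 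Differentiating the identity $a^s_r(\gamma\rho)=a^s_l(\gamma)\,a^l_r(\rho)$ along the one-parameter subgroup generated by $\nu$ and using the chain rule then yields
\begin{equation*}
\mcD\bigl(g\cdot a^s_r(\rho)\bigr)=\mcD(g)\,a^s_r(\rho)+g\,\mathrm{ad}(\nu)^s_l\,a^l_r(\rho),
\end{equation*}
where $\mathrm{ad}(\nu)^s_l$ is the matrix of the Lie algebra adjoint action evaluated at $\nu$, which is invariant because $\nu$ is.

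Combining the two identities and iterating through the structure of $H^i_\alpha$ and $F^i_\beta$, each $A^i_r$ assumes the form $f^i_s(\mbn,\iota(x),[\boldsymbol{\kappa}])\,a^s_r(\rho)$, with $f^i_s$ an explicit polynomial expression in the invariant coefficients of the operators, in the Maurer--Cartan invariants $a^s_l(\iota(\s_{\mbJ}\rho))$, in the matrix elements $\mathrm{ad}(\nu)^s_l$, and in their $\mcD_{(k)}$-derivatives and shifts. The main obstacle is the derivative identity: one must verify that $\mcD(\rho)\,\rho^{-1}$ is well defined and invariant, and that its action on the adjoint representation produces invariant coefficients. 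Projectability is essential here, since otherwise $\mcD$ would not commute with the shifts built into $H^i_\alpha$ and $F^i_\beta$, obstructing the interchange of $\mcD$ and $\s_\mbK$ required for the induction.
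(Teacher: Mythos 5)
Your proposal is correct, and it rests on the same two pillars as the paper's proof: the representation property of the adjoint matrices and the invariance of the Maurer--Cartan data of the frame (the discrete matrices $\s_\mbK\rho\,\rho^{-1}$ and the curvature matrix $\mcD\rho\,\rho^{-1}$), with projectability guaranteeing that $\mcD$ and the shifts behave as invariant operators. Where you differ is in how the derivative terms are organised: the paper factors all derivatives and shifts of the frame at once, writing $\rho_{j;\mbK}=\left\{\rho_{j;\mbK}\,\rho^{-1}\right\}\rho$ with the braced matrix invariant, and then uses $\phi^i_l\,a_r^l(\rho_{j;\mbK})=\left\{\phi^i_l\,a^l_s\!\left(\rho_{j;\mbK}\,\rho^{-1}\right)\right\}a_r^s(\rho)$, whereas you proceed infinitesimally and inductively, using the single identity $\mcD\bigl(a^s_r(\rho)\bigr)=\mathrm{ad}(\nu)^s_l\,a^l_r(\rho)$ with $\nu=\mcD(\rho)\,\rho^{-1}$, together with the shift identity and the Leibniz rule. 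Your route has the merit of never evaluating the adjoint map on the matrices $\mcD_{(j)}\s_\mbK\rho$, which are not group elements (a point the paper glosses over by working in a matrix representation), and it produces the invariant coefficients $f^i_s$ recursively and explicitly; the paper's factorization is more compact, disposing of all orders $j$ and all shifts $\mbK$ in one step. One small caveat: your closing remark that projectability is needed ``to interchange $\mcD$ and $\s_\mbK$'' overstates its role in the induction itself --- you can simply apply the operators in the order in which they occur in each monomial $\mcD_{(j)}\s_\mbK$ --- but projectability is indeed what makes $\mcD$ a single well-defined invariant derivative commuting with the (invariant) shifts in the setting of the Proposition, so the conclusion stands.
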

\begin{proof}
	The invariant differential-difference operators in \eqref{ddCL0} and \eqref{ddCLi} act linearly on products of invariants and adjoint components $a_r^s(\rho)$. Consequently, every term in $A_r^i$ is of the form~$\phi^i_l a_r^l(\rho_{j;\mbK})$, where $\rho_{j;\mbK}=\mcD_{(j)}\s_\mbK\rho$ and $\phi^i_s$ is invariant. The invariantization of $\rho_{1;0}$ (in a~matrix representation) is the curvature matrix, $\rho_{1;0} \rho^{-1}$ (see \cite{gonccalves2016moving,mansfield2010practical}). By applying powers of~$\mcD$ to the curvature matrix and eliminating derivatives of order $1,\dots,j-1$, one finds that the term in braces in the identity below is invariant:
	\[
	\rho_{j;0}=\bigl\{\rho_{j;0} \rho^{-1}\bigr\}\rho.
	\]
	Shifting this, and using our earlier result that $\rho_{0;\mbK} \rho^{-1}$ is invariant for a difference moving frame, shows that
	\[
	\rho_{j;\mbK}=\bigl\{\rho_{j;\mbK} \rho_{0;\mbK}^{-1}\bigr\}\bigl\{\rho_{0;\mbK} \rho^{-1}\bigr\}\rho=\bigl\{\rho_{j;\mbK} \rho^{-1}\bigr\}\rho
	\]
	is a product of invariants (in braces) and $\rho$. The adjoint matrices $\bigl(a_r^l(g)\bigr)$ constitute a Lie group representation, so
	\[
	\phi^i_l a_r^l(\rho_{j;\mbK})=\bigl\{\phi^i_l a^l_s \bigl(\rho_{j;\mbK} \rho^{-1}\bigr)\bigr\}a_r^s(\rho),
	\]
	which is in the required form.
\end{proof}

\section{Examples}\label{exam}

\begin{Example}
To illustrate the calculations in a simple context (without any particular application), consider the Lagrangian one-form
	\begin{equation}\label{ddex1Lag}
	\Lrm \de x= \frac{(\uio)^2}{\uoi-\uoo} \de x,
	\end{equation}
whose Euler--Lagrange equation is
\begin{equation}\label{ddex1ELu}
\E_u(\Lrm)=\left(\frac{-2u_{2;0}}{\uoi-\uoo}+\frac{2\uio u_{1;1}-(\uio)^2}{(\uoi-\uoo)^2} - \frac{(u_{1;-1})^2}{(\uoo-u_{0;-1})^2}\right)=0.
\end{equation}
The one-form \eqref{ddex1Lag} is invariant under the two-parameter Lie group of point transformations
\[
g\colon\ (x,n,u)\longmapsto (\widetilde{x},n,\widetilde{u})=(bx,n,bu+a).
\]
The infinitesimal generators are linear combinations of $\mbv_1=\p_u$ and $\mbv_2=x\p_x+u\p_u$, which yield~$(\xi_1,Q_1)=(0,1)$ and $(\xi_2,Q_2)=(x,\uoo-x\uio)$. Then \eqref{dduNoe} gives the following conservation laws (expressed as one-forms for comparison):
\begin{gather}
-Q_1\E_u(\Lrm) \de x=D\left(\frac{2u_{1;0}}{\uoi-\uoo}\right)\de x+D_n\left(\frac{-(u_{1;-1})^2}{(\uoo-u_{0;-1})^2}\right)\de x=0,\nonumber\\
-Q_2\E_u(\Lrm) \de x=D\left(\frac{u_{1;0}(2\uoo-x\uio)}{\uoi-\uoo}\right)\de x+
	D_n\left(\frac{(u_{1;-1})^2(x\uio-\uoo)}{(\uoo-u_{0;-1})^2}\right)\de x=0.\label{dduCL2ex1}
\end{gather}
Note that \eqref{dduCL2ex1} includes the term $D(\Lrm \xi_2) \de x$.

Reflecting the identities $\mbv_1=b\widetilde{\mbv}$ and $\mbv_2=-a\widetilde{\mbv}_1+\widetilde{\mbv}_2$, the adjoint representation is given by
\begin{equation*}
	a_1^1(g)=b,\qquad a_1^2(g)=0,\qquad a_2^1(g)=-a,\qquad a_2^2(g)=1.
\end{equation*}

The normalization $\iota(x)=1,\ \iota(\uoo)=0$ gives a projectable moving frame $\rho$, on which
\[
a=\frac{-\uoo}{x} ,\qquad b=\frac{1}{x} .
\]
Therefore, $\iota(\de x)=x^{-1}\de x$ and the invariantized total derivative operator is $\mcD=xD$. On the moving frame, the adjoint representation has components
\begin{equation*}
	a_1^1(\rho)=\frac{1}{x} ,\qquad a_1^2(\rho)=0,\qquad a_2^1(\rho)=\frac{\uoo}{x} ,\qquad a_2^2(\rho)=1.
\end{equation*}
A generating set of invariants is
\[
\kappa^1=\iota(\uio)=\uio,\qquad\kappa^2=\iota(\uoi)=\frac{\uoi-\uoo}{x} ,
\]
which satisfy the syzygy
\begin{equation}\label{ddex1syz}
\kappa_{0;1}^1=\iota(u_{1;1})=\kappa^1+\kappa^2_{1;0}+\kappa^2.
\end{equation}
The Lagrangian one-form amounts to
\[
L^{\boldsymbol{\kappa}}\iota(\de x)=\frac{\bigl(\kappa^1\bigr)^2}{\kappa^2} \iota(\de x),
\]
and so
\[
\E_{\kappa^1}(L^{\boldsymbol{\kappa}})=\frac{2\kappa^1}{\kappa^2} ,\qquad \E_{\kappa^2}(L^{\boldsymbol{\kappa}})=-\left(\frac{\kappa^1}{\kappa^2}\right)^2.
\]
In terms of $\sigma=\iota(\uoo')=x^{-1}\uoo'$, the $t$-derivatives of the generating invariants are $\bigl(\kappa^\beta\bigr)'=\mcH^\beta\sigma$, where
\[
\mcH^1=\mcD+\id,\qquad \mcH^2=\s-\id.
\]
Consequently,
\begin{align*}
	\frac{\de L^{\boldsymbol{\kappa}}}{\de t} \iota(\de x)&=\left(\frac{2\kappa^1}{\kappa^2} \bigl(\kappa^1\bigr)'-\left(\frac{\kappa^1}{\kappa^2}\right)^2\bigl(\kappa^2\bigr)'\right)\iota(\de x)=\E_{\kappa^\beta}(L^{\boldsymbol{\kappa}})\bigl(\kappa^\beta\bigr)' \iota(\de x)\\
	&=\E_{\kappa^\beta}(L^{\boldsymbol{\kappa}}) \bigl(\mcH^\beta\sigma\bigr) \iota(\de x)=\bigl(\bigl(\mathcal{H}^{\beta}\bigr)^{\boldsymbol{\dagger}}\mathrm{E}_{\kappa^{\beta}} ( L^{\boldsymbol \kappa})\bigr)\sigma \iota(\de x)+\mcD\mathrm{iv}(A_\mcH) \iota(\de x),
\end{align*}
where
\begin{equation*}
	\mcD\mathrm{iv}(A_\mcH)=\mcD \bigl(\E_{\kappa^1}(L^{\boldsymbol{\kappa}}) \sigma\bigr)+D_n\bigl(\bigl\{\s^{-1}\E_{\kappa^2}(L^{\boldsymbol{\kappa}})\bigr\}\sigma\bigr).
\end{equation*}
In this example, $L^{\boldsymbol{\kappa}}$ does not involve derivatives or shifts of the generating invariants, and hence~$\mcD\mathrm{iv}(A_{\boldsymbol{\kappa}})=0$.
The invariantized Euler--Lagrange equation is
\begin{align*}
	0&=\bigl(\mathcal{H}^{\beta}\bigr)^{\boldsymbol{\dagger}} \mathrm{E}_{\kappa^{\beta}} ( L^{\boldsymbol \kappa})=(-\mcD+\id)\left(\frac{2\kappa^1}{\kappa^2}\right)+\bigl(\s^{-1}-\id\bigr)\left(-\left(\frac{\kappa^1}{\kappa^2}\right)^2\right)\\
	&=\frac{2\bigl(\kappa^1-\kappa^1_{1;0}\bigr)}{\kappa^2}+\frac{\kappa^1\bigl(\kappa^1+2\kappa^2_{1;0}\bigr)}{\bigl(\kappa^2\bigr)^2}-\left(\frac{\kappa^1_{0;-1}}{\kappa^2_{0;-1}}\right)^2.
\end{align*}
For comparison, one can invariantize \eqref{ddex1ELu} directly, using \eqref{ddex1syz} and
\[
\iota(u_{2;0})=\kappa^1_{1;0} ,\qquad \iota(u_{1;-1})=\kappa^1_{0;-1} ,\qquad \iota(u_{0;-1})=-\kappa^2_{0;-1} .
\]
From Proposition \ref{Prop: ddinvNoe}, the conservation laws given by Noether's theorem amount to
\begin{gather}
\mcD \bigl\{(\E_{\kappa^1}(L^{\boldsymbol{\kappa}}) \iota (Q_s )+L^{\boldsymbol{\kappa}}\iota(\xi_s)) a_{r}^{s}(\rho) \bigr\} \iota(\de x)\nonumber\\
\qquad{}+ D_n \bigl\{\bigl\{\s^{-1}\E_{\kappa^2}(L^{\boldsymbol{\kappa}})\bigr\} \iota (Q_s )a^s_r(\rho)\bigr\} \iota(\de x)=0.\label{ddex1CLform}
\end{gather}
Substituting
\[
\iota(Q_s) a_1^s(\rho)=\frac{1}{x} ,\qquad \iota(Q_s) a_2^s(\rho)=\frac{\uoo}{x} -\kappa^1,\qquad
\iota(\xi_s) a_1^s(\rho)=0,\qquad
\iota(\xi_s) a_2^s(\rho)=1,
\]
into \eqref{ddex1CLform} gives the conservation laws
\begin{align*}
	0&=\mcD \left\{\frac{2\kappa^1}{x\kappa^2}\right\} \iota(\de x)+ D_n \left\{- \frac{1}{x}\left(\frac{\kappa^1_{0;-1}}{\kappa^2_{0;-1}}\right)^2\right\} \iota(\de x),\\
	0&=\mcD \left\{\frac{2\kappa^1}{\kappa^2}\left(\frac{\uoo}{x}-\kappa^1\right)+\frac{\bigl(\kappa^1\bigr)^2}{\kappa^2}\right\} \iota(\de x)+ D_n \left\{\left(\frac{\kappa^1_{0;-1}}{\kappa^2_{0;-1}}\right)^2\left(\kappa^1-\frac{\uoo}{x}\right)\right\} \iota(\de x)\\\
	&=\mcD \left\{\frac{\kappa^1}{\kappa^2}\left(\frac{2\uoo}{x}-\kappa^1\right)\right\} \iota(\de x)+ D_n \left\{\left(\frac{\kappa^1_{0;-1}}{\kappa^2_{0;-1}}\right)^2\left(\kappa^1-\frac{\uoo}{x}\right)\right\} \iota(\de x).
\end{align*}
\end{Example}

\begin{Example}
	Method of lines semi-discretizations are a common source of \ddes with just one continuous independent variable. The nonlinear Schr\"odinger (NLS) equation for a field with real and imaginary parts $u$ and $v$ respectively has the following (non-integrable) semi-discretization, with uniform step length $h$:
	\begin{gather*}
		-\vio+\uoo\bigl(\uoo^2+\voo^2\bigr)+h^{-2}(\uom-2\uoo+\uoi)=0,\\
		\uio+\voo\bigl(\uoo^2+\voo^2\bigr)+h^{-2}(\vom-2\voo+\voi)=0.
	\end{gather*}
	These are the Euler--Lagrange equations corresponding to the Lagrangian one-form
	\begin{gather*}
	\Lrm \de x=\left\{\frac{1}{2}(\voo\uio-\uoo\vio)+\frac{1}{4}\bigl(\uoo^2+\voo^2\bigr)^2\right.\\
\left.\phantom{\Lrm \de x=}{}-\frac{1}{2}h^{-2}\bigl((\uoi-\uoo)^2+(\voi-\voo)^2\bigr)\right\}\de x,
	\end{gather*}
	which is invariant under the two-parameter abelian Lie group of point transformations
	\[
	g\colon\ (x,n,u,v)\longmapsto(\widetilde{x},\widetilde{n},\widetilde{u},\widetilde{v})=(x+a,n,u\cos b+v\sin b,-u\sin b+v\cos b).
	\]
	The infinitesimal generators are $\mbv_1=\p_x$ and $\mbv_2=v\p_u-u\p_v$, so (using variable names rather than indices, for clarity)
	\begin{equation*}
		(\xi_1,Q_1^u,Q_1^v)=(1,-\uio,-\vio),\qquad (\xi_2,Q_2^u,Q_2^v)=(0,\voo,-\uoo).
	\end{equation*}
	As the Lie group is abelian, the adjoint representation is the identity, so $a_r^s(g)=\delta_r^s$ for all $g$.
	
	We now choose the normalization $\iota(x)=0$, $\iota(\voo)=0$, temporarily restricting attention to~$\uoo>0$. (Other normalizations can be used for the remaining coordinate patches.) This gives the frame $\rho$ defined by
	\[
	a\big|_\rho = -x,\qquad b\big|_\rho=\tan^{-1}\left(\frac{\voo}{\uoo}\right).
	\]
	In the calculations that follow, $\cos b$ and $\sin b$ (but not $b$) are evaluated on the frame, so we use
	\[
	a\big|_\rho = -x,\qquad \cos b\big|_\rho=\frac{\uoo}{\sqrt{\uoo^2+\voo^2}} ,\qquad \sin b\big|_\rho=\frac{\voo}{\sqrt{\uoo^2+\voo^2}} ,
	\]
	which extends to other coordinate patches. Note that $\iota(\de x)=\de x$, and so the invariantized total derivative is $\mcD=D$.
	The invariants are generated by
	\begin{gather*}
		\kappa^1=\iota(\uoo)=\sqrt{\uoo^2+\voo^2},\qquad \kappa^2=\uoo\vio-\voo\uio,\qquad
\kappa^3=\uoo\uoi+\voo\voi.
	\end{gather*}
	To see this, note that all derivatives can be obtained from
	\[
	\uio=\frac{\uoo\kappa^1\kappa^1_{1;0}-\voo\kappa^2}{\bigl(\kappa^1\bigr)^2} ,\qquad \vio=\frac{\uoo\kappa^2+\voo\kappa^1\kappa^1_{1;0}}{\bigl(\kappa^1\bigr)^2} ,
	\]
	and that under the constraint $\iota(\voi)\geq 0$, all shifts can be obtained from
	\begin{equation*}
	\uoi=\frac{\uoo\kappa^3-\voo\phi}{\bigl(\kappa^1\bigr)^2} ,\qquad \voi=\frac{\uoo\phi+\voo\kappa^3}{\bigl(\kappa^1\bigr)^2} ,
\end{equation*}
where \[\phi=\bigl\{\bigl(\kappa^1\kappa^1_{0;1}\bigr)^2-\bigl(\kappa^3\bigr)^2\bigr\}^{1/2}.\]
	We adopt this constraint for definiteness; if it is not satisfied, replace $\phi$ by $-\phi$ throughout. By calculating $u_{1;1}$ (or $v_{1;1}$), one obtains the syzygy
	\begin{equation*}
		\frac{\kappa^3\kappa^1_{1;1}}{\kappa^1_{0;1}}-\kappa^3_{1;0}+\frac{\kappa^1_{1;0}}{\kappa^1}+\frac{\phi \kappa^2}{\bigl(\kappa^1\bigr)^2}-\frac{\phi \kappa^2_{0;1}}{\bigl(\kappa^1_{0;1}\bigr)^2}=0.
	\end{equation*}

In terms of the generating parametric derivatives,
\[
\sigma^u=\iota(\uoo')=\frac{\uoo\uoo'+\voo\voo'}{\kappa^1} ,\qquad \sigma^v=\iota(\voo')=\frac{\uoo\voo'-\voo\uoo'}{\kappa^1} ,
\]
the derivatives of the generating invariants are
\begin{gather*}
\bigl(\kappa^1\bigr)'=\sigma^u,\qquad
 \bigl(\kappa^2\bigr)'=\frac{2\kappa^2}{\kappa^1} \sigma^u+\bigl(\kappa^1 D-\kappa^1_{1;0}\bigr) \sigma^v,\\ \bigl(\kappa^3\bigr)'=\left(\frac{\kappa^3}{\kappa^1}+\frac{\kappa^3}{\kappa^1_{0;1}} \s\right)\sigma^u+\left(\frac{\phi}{\kappa^1}-\frac{\phi}{\kappa^1_{0;1}} \s\right) \sigma^v.
\end{gather*}
As $\iota(\de x)=\de x$ and $\mcD=D$, the invariant Euler--Lagrange equations and conservation laws can be calculated directly from
\[
L^{\boldsymbol{\kappa}}=- \frac{1}{2}\kappa^2+\frac{1}{4}\bigl(\kappa^1\bigr)^4-\frac{1}{2}h^{-2}\bigl(\bigl(\kappa^1_{0;1}\bigr)^2-2\kappa^3+\bigl(\kappa^1\bigr)^2\bigr).
\]
Differentiating this, we obtain
\begin{align}
	\frac{\de}{\de t}L^{\boldsymbol{\kappa}}={}&- \frac{1}{2}\bigl(\kappa^2\bigr)'+\bigl(\kappa^1\bigr)^3\bigl(\kappa^1\bigr)'-h^{-2}\bigl(\kappa^1_{0;1}\bigl(\kappa^1_{0;1}\bigr)'-\bigl(\kappa^3\bigr)'+\kappa^1\bigl(\kappa^1\bigr)'\bigr)\nonumber\\
	={}&\bigl\{\bigl(\kappa^1\bigr)^3-2h^{-2}\kappa^1\bigr\}\bigl(\kappa^1\bigr)'- \frac{1}{2}\bigl(\kappa^2\bigr)'+h^{-2}\bigl(\kappa^3\bigr)'\label{ddex2ibp}
	+\underbrace{D_n \bigl(-h^{-2}\kappa^1\bigl(\kappa^1\bigr)'\bigr)}_{\mcD\mathrm{iv}A_{\boldsymbol{\kappa}}}\\
	={}&\left\{\bigl(\kappa^1\bigr)^3-\frac{2\kappa^1}{h^2}-\frac{\kappa^2}{\kappa^1}+\frac{\kappa^3}{h^2\kappa^1}+\frac{\kappa^3}{h^2\kappa^1_{0;1}} \s\right\}\sigma^u\nonumber\\
	&+\left\{-\frac{1}{2}\kappa^1 D+\frac{1}{2}\kappa^1_{1;0}+\frac{\phi}{h^2\kappa^1}-\frac{\phi}{h^2\kappa^1_{0;1}} \s\right\} \sigma^v+D_n \bigl(-h^{-2}\kappa^1\bigl(\kappa^1\bigr)'\bigr)\nonumber\\ ={}&\left\{\bigl(\kappa^1\bigr)^3-\frac{2\kappa^1}{h^2}-\frac{\kappa^2}{\kappa^1}+\frac{\kappa^3}{h^2\kappa^1}+\frac{\kappa^3_{0,-1}}{h^2\kappa^1}\right\}\sigma^u+\left\{\kappa^1_{1;0}+\frac{\phi}{h^2\kappa^1}-\frac{\s^{-1}\phi}{h^2\kappa^1}\right\} \sigma^v\nonumber\\
	&+D\left(-\frac{1}{2}\kappa^1\sigma^v\right)+D_n \left(\frac{\kappa^3_{0,-1}}{h^2\kappa^1} \sigma^u-\frac{\s^{-1}\phi}{h^2\kappa^1} \sigma^v\right)+D_n \bigl(-h^{-2}\kappa^1\bigl(\kappa^1\bigr)'\bigr).\nonumber
	\end{align}
Consequently, the invariantized Euler--Lagrange equations are
\begin{align*}
	\iota(\E_u(\Lrm))&=\bigl(\kappa^1\bigr)^3-\frac{2\kappa^1}{h^2}-\frac{\kappa^2}{\kappa^1}+\frac{\kappa^3}{h^2\kappa^1}+\frac{\kappa^3_{0,-1}}{h^2\kappa^1}=0,\nonumber\\
	\iota(\E_v(\Lrm))&=\kappa^1_{1;0}+\frac{\phi}{h^2\kappa^1}-\frac{\s^{-1}\phi}{h^2\kappa^1}=0.
\end{align*}
In this example, the conservation laws have contributions from both $A_{\boldsymbol{\kappa}}$ (see \eqref{ddex2ibp}) and $A_\mcH$. Using $a_r^s(\rho)=\delta_r^s$ and
\begin{gather*}
\begin{split}
&\iota(Q_1^u)=-\kappa^1_{1;0},\qquad \iota(Q_1^v)=-\kappa^2/\kappa^1,\qquad \iota(Q_2^u)=0,\\
& \iota(Q_2^v)=-\kappa^1,\qquad\iota(\xi_1)=1,\qquad\iota(\xi_2)=0,
\end{split}
\end{gather*}
we obtain the following conservation laws from Proposition \ref{Prop: ddinvNoe}:
\begin{align*}
	0={}&D\left\{-\frac{1}{2}\kappa^1\iota(Q_1^v)+\iota(\xi_1)L^{\boldsymbol{\kappa}}\right\}+D_n \left\{\frac{\kappa^3_{0,-1}}{h^2\kappa^1} \iota(Q_1^u)-\frac{\s^{-1}\phi}{h^2\kappa^1} \iota(Q_1^v)-\frac{\kappa^1}{h^2}\bigl(-\iota(\xi_1)\kappa^1_{1;0}\bigr)\right\}\\
	={}&D\left\{\frac{1}{4}\bigl(\kappa^1\bigr)^4-\frac{\bigl(\kappa^1_{0;1}\bigr)^2}{2h^2}+\frac{\kappa^3}{h^2}-\frac{\bigl(\kappa^1\bigr)^2}{2h^2}\right\}+D_n \left\{-\frac{\kappa^1_{1;0}\kappa^3_{0,-1}}{h^2\kappa^1}+\frac{\kappa^2 \s^{-1}\phi}{h^2\bigl(\kappa^1\bigr)^2}+\frac{\kappa^1\kappa^1_{1;0}}{h^2}\right\},\\
	0={}&D\left\{-\frac{1}{2}\kappa^1\iota(Q_2^v)\right\}+D_n \left\{\frac{\kappa^3_{0,-1}}{h^2\kappa^1} \iota(Q_2^u)-\frac{\s^{-1}\phi}{h^2\kappa^1} \iota(Q_2^v)\right\}\\
	={}&D\left\{\frac{1}{2}\bigl(\kappa^1\bigr)^2\right\}+D_n \left\{\frac{\s^{-1}\phi}{h^2}\right\}.
\end{align*}
\end{Example}

\section{Concluding remarks}\label{conc}

For \pdes and \ddese, the prolongation space over a fixed base point $\mbn$ provides a continuous setting in which moving frames can be used. Difference moving frames respect the ordering of each discrete independent variable and the arbitrariness of the base point. For variational problems, we have shown how to calculate the invariantized Euler--Lagrange equations and equivariant Noether conservation laws directly from an invariant Lagrangian, $L^{\boldsymbol{\kappa}}$.

We have treated the coordinates $n^i$ on the lattice of independent variables, $\bbz^m$, as being fixed by the Lie group of transformations. For a formulation that allows discrete symmetries of the lattice, it would be necessary to replace the Lagrangian $L^{\boldsymbol{\kappa}}$ by the Lagrangian $m$-form, $L^{\boldsymbol{\kappa}} \mathrm{vol}$, where $\mathrm{vol}$ denotes the difference volume form (see \cite{hydon2004variational,mansfield2008difference}). This adds complexity, but little extra insight. However, when each $n^i$ is fixed, only the coefficients of difference forms are transformed by the group action, so one can use moving frames as we have done, without the additional machinery of difference forms.

Our treatment of \ddes has been restricted to a single continuous independent variable and a projectable moving frame, enabling the shift and invariant differential operators to commute. More generally, let $p$ be the number of continuous independent variables. If $p>1$, the complexity increases, because the invariant differential operators do not necessarily commute with one another. However, if the moving frame is projectable, all shift operators commute with the invariant derivatives, enabling existing results from PDE theory to be used for the differential part of the calculations. The requirement for the group action to be projectable is not sufficient to guarantee the existence of a projectable moving frame. Nevertheless, projectable moving frames are relevant to many \ddes of interest, including some well-known integrable systems and method of lines semi-discretizations of PDEs.

\subsection*{Acknowledgments}

We thank Professor Elizabeth Mansfield, whose strong advocacy of moving frames, insight and encouragement have led to this project. The work was partially supported by EPSRC grant number EP/R513246/1.

\pdfbookmark[1]{References}{ref}
\LastPageEnding

\end{document}